\definecolor{lavender}{rgb}{0.9, 0.9, 0.98}
\definecolor{UGARed}{rgb}{0.729,0.047,0.184}
\newcommand\nnfootnote[1]{%
  \begin{NoHyper}
  \renewcommand\thefootnote{}\footnote{#1}%
  \addtocounter{footnote}{-1}%
  \end{NoHyper}
}
\theoremstyle{plain} 
\newtheorem{thm}{Theorem}[section]
\newtheorem{cor}[thm]{Corollary}
\newtheorem{prop}[thm]{Proposition}
\newtheorem{defn}[thm]{Definition}
\newtheorem{rmk}[thm]{Remark}
\numberwithin{equation}{section}
\newcommand{\zw}[1]{{\color{UGARed}{#1\textsuperscript{ZW}}}}
\DeclareMathOperator{\Cdb}{{\mathbb C}}
\DeclareMathOperator{\Hdb}{{\mathbb H}}
\DeclareMathOperator{\Rdb}{{\mathbb R}}
\DeclareMathOperator{\Al}{{\mathcal A}}
\newcommand{\norm}[1]{\left\Vert#1\right\Vert}
\begin{document}
\title{Error Estimates and Higher Order Trotter Product Formulas in Jordan-Banach Algebras}


\author[1, 2]{Sarah Chehade \thanks{\texttt{sarah-chehade@utc.edu}}}
\author[3]{Andrea Delgado \thanks{\texttt{delgado.andrea.21@gmail.com}}}
\author[4]{Shuzhou Wang \thanks{\texttt{szwang@uga.edu}}}
\author[4, 5]{Zhenhua Wang \thanks{\texttt{ezrawang@uga.edu}}}

\affil[1]{Computational Sciences and Engineering Division, Oak Ridge National Laboratory, Oak Ridge, TN 37831}
\affil[2]{UTC Quantum Center, University of Tennessee, Chattanooga, TN 37403}
\affil[3]{Physics Division, Oak Ridge National Laboratory, Oak Ridge, TN 37831}
\affil[4]{Department of Mathematics, University of Georgia, Athens, GA 30602}
\affil[5]{Department of Mathematics, Alabama A$\&$M University, Huntsville, AL 35811}

\date{\today}
\maketitle

\begin{abstract}
     In quantum computing, Trotter estimates are critical for enabling efficient simulation of quantum systems and quantum dynamics, help implement complex quantum algorithms, and provide a systematic way to control approximate errors. 
In this paper, we extend the analysis of Trotter-Suzuki approximations, including third and higher orders, to Jordan-Banach algebras. We solve an open problem in our earlier paper on the existence of second-order Trotter formula error estimation in Jordan-Banach algebras. To illustrate our work, we apply our formula to simulate Trotter-factorized spins, and show improvements in the approximations. Our approach demonstrates the adaptability of Trotter product formulas and estimates to non-associative settings, which offers new insights into the applications of Jordan algebra theory to operator dynamics. 

\end{abstract}

\nnfootnote{\textit{Mathematics Subject Classification \rm{(2020)}:} {Primary
	17C90,
        81P45,
        15A16;
	Secondary
	17C65, 
	81R15,
        46H70
	}}

\nnfootnote{\textit{Key words:} Trotter Product formula, Error estimate, Higher order approximation, Jordan-Banach algebra}
\vspace{0.25cm}
 \nnfootnote{This manuscript has been partially authored by UT-Battelle, LLC under Contract No. DE-AC05-00OR22725 with the U.S. Department of Energy. The United States Government retains and the publisher, by accepting the article for publication, acknowledges that the United States Government retains a non-exclusive, paid-up, irrevocable, world-wide license to publish or reproduce the published form of this manuscript, or allow others to do so, for United States Government purposes. The Department of Energy will provide public access to these results of federally sponsored research in accordance with the DOE Public Access Plan (http://energy.gov/downloads/doe-public-access-plan).}
\section{Introduction}
Jordan algebras, a class of non-associative algebras, introduced by Pascual Jordan in the 1930s, have significant applications in various areas of mathematics, physics, and computer science. In particular, in both classical and quantum mechanics, observables naturally form a Jordan algebra. Originally formulated to provide formal and algebraic methods to describe quantum mechanics \cite{Jordan:Pascual:vonNeumann}, Jordan algebras have since emerged as a potentially useful mathematical methodology in quantum information science \cite{wang:wang:2021:means, wang:wang:2022:refined, wang:wang:2022:entropy}.

Trotter product formula, developed by Hale Trotter \cite{trotter:1959:product}, is a fundamental result in the mathematical theory of quantum dynamics. It provides a rigorous framework for approximating the exponential of a sum of non-commuting operators. 
This formula underpins the study of time-evolution in quantum systems and offers a pathway to discretizing continuous dynamics, a critical tool in numerical simulations of quantum mechanics.
Building on this work, Suzuki developed advanced error bounds and fractal decompositions for operator exponentials, enabling higher-order approximations and reducing error rates in simulations \cite{ suzuki:1976:generalized, suzuki:1985:transfer, suzuki:1990:fractal, suzuki:1991:general}. Suzuki's methods, known as Suzuki estimates, have found broad applications in quantum chemistry, condensed matter physics, and quantum computing \cite{Poulin:etal:chemistry}.
More recently, Childs and collaborators expanded the theory by introducing refined error bounds based on commutator scaling, providing deeper insights into the structure of Trotter error \cite{childs:etal:trotter}. This work bridges theoretical and practical aspects of quantum simulation, offering guidelines for optimizing algorithms on near-term and fault-tolerant quantum devices. 

The extension of Trotter product formulas to the framework of Jordan-Banach algebras was first carried out by Peralta and collaborators to study spectral-valued multiplicative functionals \cite{ escolano:peralta:villena:2024:lie}. 
Motivated by their work, we established Trotter product formula error estimations in JB-algebras \cite{chehade:wang:wang:2024:suzuki}, which is applicable to quantum dynamics, and extended Trotter product formulas for three elements to any finite number of elements in Jordan-Banach algebras. This expansion established a quadratic speedup in Trotter error estimates in JB-algebras. This theory has found applications in quantum information science \cite{wang:chehade:dumitrescu:2024:semi}. There, the advanced error estimates and algebraic insights underpin models of quantum processes with symmetry constraints, enabling more precise descriptions of dynamics and interactions. These applications illustrate the impact of merging Jordan algebra methods with Trotter approximations, bridging abstract mathematics and real-world quantum technologies.

In this manuscript, 
we develop higher-order Jordan-Trotter product formulas, 
and establish second and third order Trotter error estimates in Jordan-Banach algebras. 
As an illustration, we apply our formula to simulate Trotter-factorized spins, and show improvements in the approximations. The structure of the paper is as follows: \Cref{section:preliminaries} recalls the mathematical preliminaries and sets up notation. In \Cref{sec:2order}, we establish second-order Trotter error estimates in Jordan-Banach algebras, building on and extending the authors' earlier work on JB-algebras in \cite{chehade:wang:wang:2024:suzuki}. This section also solves the open problem on the existence of second-order Trotter error estimates in Jordan-Banach algebras, as mentioned in the introduction of that paper. In \Cref{sec:highOrder}, we first develop a third order Jordan-Trotter product formula and establish its error estimate in Jordan-Banach algebras. Then, we construct higher-order Trotter product formulas in Jordan-Banach algebras. Finally, \Cref{sec:example} highlights the implications of the third order Jordan-Trotter product formula in quantum simulation, demonstrating how our methodology enables more accurate and efficient modeling of quantum systems.

\section{Preliminaries}\label{section:preliminaries}
For convenience of the reader, in this section, we give some background on Jordan-Banach algebras, JB*-algebras, and fix the notation. For more information, we refer the reader to \cite{alfsen:2003:geometry,cabrera:rodriguez:2014:non, hanche:1984:jordan}.

\begin{defn}\label{defn:ja}
	A {\bf Jordan algebra} $\Al$ over real or complex numbers is a vector space $\Al$ over $\Rdb$ or $\Cdb$ equipped with a  bilinear product $\circ$ that satisfies the following identities:
	\[A\circ B =B\circ A, \,\ \,\ (A^2\circ B)\circ A=A^2\circ (B\circ A),\]
where $\displaystyle A^2$ means $\displaystyle A\circ A.$ 
\end{defn}
Any associative algebra $\Al$ admits an underlying Jordan algebra structure with Jordan product given by 
\begin{align}\label{notion:specialJ}
 A\circ B=(AB+BA)/2.   
\end{align}	
	Jordan subalgebras of such underlying Jordan algebras are called {\bf special}. 
\begin{defn}\label{defn:triple}
	The Jordan triple product of any three elements $A, B, C$ in  a Jordan algebra $\Al$ is defined by 
	\begin{align}\label{notion:triple}
	\{A, B, C\}&:=(A\circ B)\circ C+(B\circ C)\circ A-(C\circ A)\circ B.
			\end{align}
\end{defn}
Note that in some literature, for instance \cite{alfsen:2003:geometry, hanche:1984:jordan}, $\displaystyle \{A, B, C\}$ is denoted by $\{ABC\}.$

\begin{defn}\label{defn:uop} 
With the notation in {\rm \Cref{defn:triple}}, we define an operator $U_{A, C}:\Al \to \Al$ by 
\begin{align}\label{notion:uop}
U_{A, C}(B):=\{A,B, C\}.
\end{align}
Denote $U_{A, A}$ by $U_A.$
If $\Al$ is special, then $U_A(B)=ABA.$ 

We also define the Jordan multiplication operator $M_A:\Al \to \Al$ by
\begin{align}\label{notion:mtop}
M_A(B):=A\circ B \, (=B\circ A).
\end{align}
\end{defn}

\begin{defn}
    A \textbf{Jordan-Banach algebra} $\mathcal{A}$ is a Jordan algebra with a complete norm satisfying $\Vert A\circ B\Vert \leq \|A\|\|B\|$, for all $A,B\in\mathcal{A}$. 
\end{defn}

The set of bounded self adjoint operators on a Hilbert space $\Hdb$, denoted by $B(\Hdb)_{sa}$ is a Jordan Banach-algebra, which is not associative. It is an important object in physics, as it is the set of observables in a quantum mechanical system. 

We have the following straightforward norm estimates for $U_{A, C}$, $U_A$ and $M_A$ in a Jordan-Banach algebra.
\begin{prop}\label{prop:uopmop}
For $A, C$ in a Jordan-Banach algebra $\Al,$ 
\begin{align}
   \Vert U_{A, C}\Vert \le 3\Vert A\Vert\Vert C\Vert, \quad
   \Vert U_{A}\Vert \le 3\Vert A\Vert^2, \quad  \Vert M_A\Vert \le \Vert A\Vert.
\end{align}
\end{prop}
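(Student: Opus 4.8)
The plan is to estimate each operator directly from its defining formula, using only the submultiplicativity of the norm ($\|A\circ B\|\le\|A\|\|B\|$) and the triangle inequality. For $M_A$, the bound $\|M_A\|\le\|A\|$ is immediate: for any $B$ one has $\|M_A(B)\|=\|A\circ B\|\le\|A\|\|B\|$, so taking the supremum over $B$ in the unit ball gives $\|M_A\|\le\|A\|$.

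For $U_{A,C}$, I would expand the triple product using its definition,
\[
U_{A,C}(B)=\{A,B,C\}=(A\circ B)\circ C+(B\circ C)\circ A-(C\circ A)\circ B,
\]
and apply the triangle inequality term by term. Each of the three summands is a twofold Jordan product of elements whose norms are bounded by $\|A\|\|B\|\|C\|$, by two applications of submultiplicativity; for instance $\|(A\circ B)\circ C\|\le\|A\circ B\|\|C\|\le\|A\|\|B\|\|C\|$. Summing the three contributions yields $\|U_{A,C}(B)\|\le 3\|A\|\|B\|\|C\|$, and taking the supremum over $\|B\|\le 1$ gives $\|U_{A,C}\|\le 3\|A\|\|C\|$. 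The bound $\|U_A\|\le 3\|A\|^2$ is then just the special case $C=A$.

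There is really no substantive obstacle here; the only thing to be careful about is bookkeeping with the three terms of the triple product and making sure submultiplicativity is applied in a valid order (it does not matter which factor of a Jordan product one "peels off" first, since $\circ$ is commutative). One could note that the constant $3$ is not claimed to be optimal — it simply reflects the number of terms in the definition of $\{A,B,C\}$ — and that in the special (associative) case $\|U_A(B)\|=\|ABA\|\le\|A\|^2\|B\|$ gives the sharper constant $1$, but for a general Jordan-Banach algebra the cruder estimate above is what we use throughout.
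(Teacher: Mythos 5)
Your proof is correct and is exactly the ``straightforward'' estimate the paper has in mind: the paper states this proposition without proof, and the intended argument is precisely your term-by-term application of the triangle inequality and submultiplicativity to the three summands of $\{A,B,C\}$, with the $M_A$ bound immediate and $U_A=U_{A,A}$ as a special case. Nothing further is needed.
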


\begin{defn}\label{defn:JB*}   
    A \textbf{JB*-algebra} $(\Al,*)$ is a complex Jordan-Banach algebra equipped with an involution $*$ satisfying that for all $A,B\in\Al$, the following condition holds 
    \begin{align*}
     \Vert A\circ B\Vert\leq \Vert A\Vert \Vert B\Vert,\quad  \quad \Vert \{A, A^*, A\}\Vert=\Vert A\Vert^3.    
    \end{align*}    
    An element $U$ in $\mathcal{A}$ is said to be \textbf{unitary} if 
    \begin{align*}
    U\circ U^*=I, \quad U^2\circ U^*=U.  
    \end{align*}
\end{defn}
Note that if $A$ is self-adjoint, $\displaystyle \exp(iA)$ is a unitary. 


\section{Second-Order Jordan-Trotter Product Formulas}\label{sec:2order} 

   For $\displaystyle A_1, A_2,\cdots, A_m$ in a unital Jordan-Banach algebra $\mathcal{A}$ and $z \in\Cdb$, define
    \begin{align} \label{notion:G}
    G(z)&:=\exp\left(z\sum_{k=1}^m A_k\right)
    \\ 
    \label{notion:J2}
    J_2(z)&:=M_{\exp(z A_m)}M_{\exp(z A_{m-1})}\cdots M_{\exp(z A_2)}(\exp(z A_1)) \\
    \label{notion:S2}
    S_2(z)&:=U_{\exp(\frac{z}{2}A_m)}U_{\exp(\frac{z}{2}A_{m-1})}\cdots U_{\exp(\frac{z}{2}A_2)}(\exp(zA_1))\\
    \label{notion:T2}
    T_2(z)&:=I+z\left( \sum_{k=1}^m A_k\right)+\frac{z^2}{2!}\left( \sum\limits_{k=1}^m A_k \right)^2. 
    \end{align}

Furthermore, if $m$ is odd, i.e. $m=2p+1$ for some positive integer $p$, then we define
\begin{align}\label{notion:QS2}
Q_{S_{2}}(z)&:=U_{\exp(zA_{2p}), \exp(zA_{2p+1})}U_{\exp(zA_{2p-2}), \exp(zA_{2p-1})}\cdots U_{\exp(zA_2), \exp(zA_3)}(exp(zA_1)).
\end{align}

Building upon our results in \cite{chehade:wang:wang:2024:suzuki}, \Cref{thm:Trotter:first:JordanBanach}, \Cref{thm:Trotter:quasisymmetric:JordanBanach} and \Cref{coro:Trotter:second:JordanBanach} below provide an affirmative answer to the question therein whether 
Trotter-Suzuki estimates exist in the Jordan-Banach algebra setting.

\begin{thm} \label{thm:Trotter:first:JordanBanach}
For any finite number of elements $A_1, A_2,\cdots, A_m$ in a unital Jordan-Banach algebra $\mathcal{A},$ we have the following estimate 
\begin{align}\label{ineq:Trotter:first:JordanBanach}
    		\left\Vert \exp \left( \sum_{k=1}^m A_j\right)-\left[J_2\left(\frac{1}{n}\right)\right]^n\right \Vert	
    		\leq
    		\dfrac{1}{3n^2}\left( \sum_{k=1}^m \Vert A_j\Vert\right)^3\exp\left(\sum_{k=1}^m {\Vert A_j\Vert}\right).	
    \end{align} 	
\end{thm}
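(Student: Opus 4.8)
The plan is to realize $\exp\!\big(\sum_{k=1}^m A_k\big)$ as $G(1/n)^n$ and to compare it with $J_2(1/n)^n$ by a telescoping-of-$n$-th-powers argument, whose input is a short-time estimate showing that $G(1/n)$ and $J_2(1/n)$ agree to order $1/n^3$. Write $s:=\sum_{k=1}^m\|A_k\|$. Since a Jordan algebra is power-associative, powers of a single element obey the classical identity $\exp(w)^n=\exp(nw)$ (the Cauchy-product proof only multiplies powers of the one element $w$); applying this to $w=\tfrac1n\sum_k A_k$ gives $\exp\!\big(\sum_k A_k\big)=G(1/n)^n$, so it suffices to bound $\|G(1/n)^n-J_2(1/n)^n\|$. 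One should expect an $O(1/n^2)$ rate rather than the classical $O(1/n)$: commutativity of $\circ$ makes $J_2(z)$ automatically agree with $\exp\!\big(z\sum_k A_k\big)$ to \emph{second} order, and that is exactly the source of the advertised quadratic speedup.

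The first ingredient is an elementary power-difference lemma in any unital Jordan-Banach algebra: for $x,y$ of norm $\le C$ one has $\|x^n-y^n\|\le n\,C^{\,n-1}\,\|x-y\|$. To prove it I would use power-associativity to write $x^n=M_x^{\,n-1}(x)$, decompose $x^n-y^n=M_x^{\,n-1}(x-y)+\big(M_x^{\,n-1}-M_y^{\,n-1}\big)(y)$, telescope the \emph{operator} difference $M_x^{\,n-1}-M_y^{\,n-1}=\sum_{j=0}^{n-2}M_x^{\,j}M_{x-y}M_y^{\,n-2-j}$ (composition of operators is genuinely associative, so this step is legitimate even though $\circ$ is not), and estimate using $\|M_a\|\le\|a\|$ from \Cref{prop:uopmop} together with submultiplicativity of the norm.

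The second ingredient is the short-time estimate $\|J_2(z)-T_2(z)\|\le \tfrac{|z|^3 s^3}{6}e^{|z|s}$, together with $\|G(z)-T_2(z)\|\le \tfrac{|z|^3 s^3}{6}e^{|z|s}$ (the latter is just Taylor's theorem for $\exp$). For $J_2$, expand each $\exp(zA_k)=\sum_{i\ge0}\tfrac{z^i}{i!}A_k^i$ and distribute the left-nested product $J_2(z)=\big(\cdots(\exp(zA_1)\circ\exp(zA_2))\circ\cdots\big)\circ\exp(zA_m)$ over the sum, getting a sum of terms $\tfrac{z^{i_1+\cdots+i_m}}{i_1!\cdots i_m!}\bigl(\cdots(A_1^{i_1}\circ A_2^{i_2})\circ\cdots\bigr)\circ A_m^{i_m}$. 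The two key points are: (i) each such term has norm at most $\tfrac{|z|^{\sum i_k}}{\prod i_k!}\prod_k\|A_k\|^{i_k}$, by iterating $\|a\circ b\|\le\|a\|\|b\|$; and (ii) the terms with $i_1+\cdots+i_m\le 2$ sum, by commutativity of $\circ$ and $I\circ a=a$, to exactly $I+z\sum_k A_k+\tfrac{z^2}{2}\big(\sum_k A_k\big)^2=T_2(z)$, regardless of how the factors are bracketed. Hence $\|J_2(z)-T_2(z)\|$ is bounded by the sum of $\tfrac{\prod_k(|z|\|A_k\|)^{i_k}}{\prod i_k!}$ over multi-indices of total degree $\ge3$, which equals $e^{|z|s}-1-|z|s-\tfrac{(|z|s)^2}{2}=\sum_{j\ge3}\tfrac{(|z|s)^j}{j!}\le \tfrac{(|z|s)^3}{6}e^{|z|s}$.

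Finally I would assemble: take $z=1/n$, $x=G(1/n)$, $y=J_2(1/n)$; then $\|x\|\le e^{s/n}$ and, iterating $\|a\circ b\|\le\|a\|\|b\|$ on $J_2$, also $\|y\|\le e^{s/n}$, so $C^{\,n-1}\le e^{s(n-1)/n}$, while $\|x-y\|\le \|x-T_2(1/n)\|+\|y-T_2(1/n)\|\le \tfrac{s^3}{3n^3}e^{s/n}$. Plugging into the power-difference lemma gives $\big\|\exp(\sum_k A_k)-[J_2(1/n)]^n\big\|\le n\cdot e^{s(n-1)/n}\cdot \tfrac{s^3}{3n^3}e^{s/n}=\tfrac{s^3}{3n^2}e^{s}$, which is the claim. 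The only place needing real care is the second ingredient: one must verify that non-associativity of $\circ$ costs nothing in the termwise norm bounds and that the degree-$\le2$ terms genuinely reassemble into $T_2(z)$ independently of the bracketing; the careful tracking of the exponential factors (so that $e^{s(n-1)/n}\cdot e^{s/n}=e^s$) is what produces the clean constant $\tfrac1{3n^2}$.
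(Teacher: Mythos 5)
Your proposal is correct and follows essentially the same route as the paper: reduce $\|E^n-F^n\|$ to $n\,C^{n-1}\|E-F\|$ by a telescoping argument built from the multiplication operators $M_x$, then bound $\|E-F\|$ by comparing both $G(1/n)$ and $J_2(1/n)$ to the common second-order Taylor polynomial $T_2(1/n)$ and estimating the degree-$\ge 3$ tails by $\tfrac{(s/n)^3}{6}e^{s/n}$. The only (harmless) difference is the exact form of the telescoping — you split $x^n-y^n=M_x^{n-1}(x-y)+(M_x^{n-1}-M_y^{n-1})(y)$ and telescope the operator difference, whereas the paper writes $E^n-F^n=\sum_{j=0}^{n-1}M_F^j\bigl((E-F)\circ E^{n-1-j}\bigr)$ — and your proof of the $T_2$ coincidence is more explicit than the paper's appeal to induction.
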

\begin{proof}
Let 
\[
E = G\left(\frac{1}{n} \right), \quad F = J_2\left(\frac{1}{n}\right),\quad \mbox{and} \quad T_2=T_2\left(\frac{1}{n}\right) .
\]
Then
\[
\|E\| \leq \exp\left(\frac{1}{n} \sum_{k=1}^m \|A_k\|\right) \quad \text{and} \quad \|F\| \leq \exp\left(\frac{1}{n} \sum_{k=1}^m \|A_k\|\right).
\]

Note that $\displaystyle \exp \left( \sum_{k=1}^m A_j\right)-\left[J_2\left(\frac{1}{n}\right)\right]^n=E^n-F^n$ and
\begin{align}
 E^n-F^n &=\sum_{j=0}^{n-1}M_F^j(E^{n-j})-\sum_{k=1}^nM_F^k(E^{n-k})=\sum_{j=0}^{n-1}M_F^j(E^{n-j})-M_F^{j+1}(E^{n-1-j}) \nonumber\\
 &=\sum_{j=0}^{n-1}M_F^j(E^{n-j}-F\circ E^{n-1-j})=\sum_{j=0}^{n-1}M_F^j\left((E-F)\circ E^{n-1-j}\right)\nonumber
\end{align}

Thus, 
\begin{align}
\Vert E^n-F^n\Vert&=\Vert \sum_{j=0}^{n-1}M_F^j\left((E-F)\circ E^{n-1-j}\right) \Vert  \leq \Vert E-F\Vert\left(\sum_{j=0}^{n-1}\Vert F\Vert^j \Vert E \Vert^{n-1-j}\right) \nonumber\\
 &\leq n \Vert E-F\Vert\left(\max\{\Vert E\Vert, \Vert F\Vert \}\right)^{n-1}\leq n \Vert E-F\Vert\exp\left(\frac{n-1}{n}\sum_{k=1}^m \Vert A_k\Vert\right) \nonumber\\
& \leq n \left(\left\Vert E-T_2\right\Vert+\left\Vert F-T_2\right \Vert\right )\exp\left(\frac{n-1}{n}\sum_{k=1}^m \Vert A_k\Vert\right). \label{ineq:En-Fn:error}
\end{align}

By induction, the second-order Taylor polynomial for $E$ is equal to that for $F$, both of which are $T_2$. Applying an argument similar to that used in the proof of Theorem 3.1 in \cite{chehade:wang:wang:2024:suzuki}, we have
\begin{align}
   \norm{E-T_2}&\leq \frac{1}{3!\cdot n^3}\left(\sum_{k=1}^m \Vert A_k\Vert \right)^3\exp \left(\frac{1}{n}\sum_{k=1}^m \Vert A_k\Vert \right) \label{ineq:E-T2:error}  \\
   \norm{F-T_2}&\leq \frac{1}{3!\cdot n^3}\left(\sum_{k=1}^m \Vert A_k\Vert \right)^3\exp \left(\frac{1}{n}\sum_{k=1}^m \Vert A_k\Vert \right). \label{ineq:F-T2:error}
\end{align}
The desired result follows by combining \cref{ineq:En-Fn:error} \cref{ineq:E-T2:error} and \cref{ineq:F-T2:error}. 
\end{proof}

Note that the technique used in inequality (3.2) of \cite{chehade:wang:wang:2024:suzuki} can not be applied to obtain inequality \Cref{ineq:En-Fn:error} here because we do not have an embedding of the Jordan-Banach algebra into a JC-algebra or JC*-algebra. 
In fact, \Cref{thm:Trotter:first:JordanBanach} presents an alternative approach to proving the Lie-Trotter formulas in Jordan-Banach algebras, utilizing methods distinct from those used by Escolano et al. in \cite{escolano:peralta:villena:2024:lie}.

\begin{prop}\label{prop:Trotter:first:JordanBanach}
Let $A_1, A_2,\cdots, A_m$ be any finite number of elements in a unital Jordan-Banach algebra $\mathcal{A}$ and $t\in \Rdb$. Then
\begin{align*}
 \exp\left(t\sum_{k=1}^m A_k\right)=J_2(t)+O(t^3).   
\end{align*}
\end{prop}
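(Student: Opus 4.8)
The plan is to deduce the statement directly from the quantitative estimate in \Cref{thm:Trotter:first:JordanBanach} by specializing to $n=1$ and rescaling. First I would observe that replacing each $A_k$ by $tA_k$ is harmless: the hypotheses of \Cref{thm:Trotter:first:JordanBanach} hold for any finite collection in a unital Jordan-Banach algebra, and under $A_k \mapsto tA_k$ one has $G(z) \mapsto \exp(zt\sum_k A_k)$ and $J_2(z) \mapsto M_{\exp(ztA_m)}\cdots(\exp(ztA_1))$. Equivalently, I can keep the $A_k$ fixed and simply note that $J_2(t)$ in the statement is exactly $J_2(1)$ for the rescaled family $\{tA_k\}$, while $\exp(t\sum_k A_k)$ is $G(1)$ for that family.

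Next I would apply \Cref{thm:Trotter:first:JordanBanach} with $n=1$ to the family $\{tA_k\}_{k=1}^m$. This yields
\begin{align*}
\left\Vert \exp\left(t\sum_{k=1}^m A_k\right) - J_2(t)\right\Vert
\;\le\; \frac{1}{3}\left(\sum_{k=1}^m \Vert tA_k\Vert\right)^3 \exp\left(\sum_{k=1}^m \Vert tA_k\Vert\right)
\;=\; \frac{|t|^3}{3}\left(\sum_{k=1}^m \Vert A_k\Vert\right)^3 \exp\left(|t|\sum_{k=1}^m \Vert A_k\Vert\right).
\end{align*}
Writing $C := \tfrac{1}{3}\big(\sum_k \Vert A_k\Vert\big)^3$ and noting that for $t$ in any bounded interval (say $|t|\le 1$) the factor $\exp(|t|\sum_k\Vert A_k\Vert)$ is bounded by the constant $\exp(\sum_k\Vert A_k\Vert)$, the right-hand side is at most a constant times $|t|^3$. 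Hence $\exp\left(t\sum_k A_k\right) - J_2(t) = O(t^3)$ as $t\to 0$, which is the assertion; for $t$ real the same bound holds without change since the estimate only involves $|t|$.

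I do not anticipate a genuine obstacle here: the content is entirely contained in \Cref{thm:Trotter:first:JordanBanach}, and the only points requiring a word of care are (i) confirming that the rescaling $A_k \mapsto tA_k$ legitimately places us in the hypotheses of that theorem (it does, since the theorem is stated for an arbitrary finite family), and (ii) being explicit that the exponential prefactor is locally bounded so that it can be absorbed into the implied constant of the $O(t^3)$ notation. One could alternatively phrase the proof via the Taylor-expansion observation already used inside the proof of \Cref{thm:Trotter:first:JordanBanach} — namely that $G(z)$ and $J_2(z)$ share the same second-order Taylor polynomial $T_2(z)$ at $z=0$ — but routing through the finished inequality is cleaner and avoids redoing that computation.
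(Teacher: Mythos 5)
Your proposal is correct, but it takes a different route from the paper. The paper's own proof is a one-liner: it simply invokes the fact (already established by induction inside the proof of \Cref{thm:Trotter:first:JordanBanach}) that the second-order Taylor polynomials of $\exp\left(t\sum_{k=1}^m A_k\right)$ and of $J_2(t)$ both equal $T_2(t)$, whence the difference is $O(t^3)$. You instead specialize the finished inequality of \Cref{thm:Trotter:first:JordanBanach} to $n=1$ applied to the rescaled family $\{tA_k\}$; this is legitimate, since the theorem holds for an arbitrary finite family and $J_2(1)$ for $\{tA_k\}$ is exactly $J_2(t)$ for $\{A_k\}$, and it yields the explicit bound $\frac{|t|^3}{3}\bigl(\sum_k\Vert A_k\Vert\bigr)^3\exp\bigl(|t|\sum_k\Vert A_k\Vert\bigr)$, which is indeed $O(t^3)$ after noting the exponential factor is locally bounded. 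What your route buys is a quantitative constant valid for all real $t$ rather than a bare asymptotic statement; what the paper's route buys is directness, since the coincidence of Taylor polynomials is the real content in either case and your argument ultimately rests on the same fact, just packaged inside the theorem. You correctly flag this alternative yourself, so there is no gap.
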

\begin{proof}
The result is derived from the fact that the 2nd-order Taylor polynomial for $\displaystyle \exp\left(t\sum_{k=1}^m A_k\right)$ coincides with that of $J_2(t)$, which is $T_2(t)$.
\end{proof}

The following result extends \cite[Theorem 2.3]{wang:2023:suzuki} to the context of Jordan-Banach algebras.
\begin{thm}\label{thm:Trotter:quasisymmetric:JordanBanach}
 Let $A_1, A_2,\cdots, A_{2p+1}$ be any finite odd number of elements in a unital Jordan-Banach algebra $\mathcal{A}$ and let $t\in \Rdb$. Then
 \[
\left\Vert\exp\left(t\sum_{k=1}^{2p+1} A_k\right)-Q_{S_2}(t)\right \Vert \leq  \frac{3^{p}+1}{6}\vert t\vert^3\left(\sum_{k=1}^{2p+1}\Vert A_k\Vert\right)^3\exp\left(\vert t\vert\sum_{k=1}^{2p+1} \Vert A_k\Vert\right)
\]
and 
\[
\exp\left(t\sum_{k=1}^{2p+1} A_k\right)=Q_{S_2}(t)+O(t^3).	
\]
\end{thm}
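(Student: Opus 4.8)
The plan is to prove this by induction on $p$, using the single-$U$-operator estimate as the base case and then peeling off one outer $U$-operator at each step, in direct analogy with the telescoping argument of \Cref{thm:Trotter:first:JordanBanach} but now with the symmetrized (second-order-accurate) building block. First I would set up notation: write $B = t\sum_{k=1}^{2p+1} A_k$, let $b = \sum_{k=1}^{2p+1}\|A_k\|$, and observe that since $\exp(B)$ and $Q_{S_2}(t)$ both have the same second-order Taylor polynomial in $t$ (this is the content, up to the estimate, of why $S_2$ and its quasisymmetric variant are ``second order''), the difference is $O(t^3)$; the second displayed assertion will then follow immediately once the quantitative bound is established. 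For the base case $p=0$ we have $Q_{S_2}(t)=\exp(tA_1)$ and the claim reduces to $\|\exp(tA_1)-\exp(tA_1)\|=0 \le \tfrac{2}{6}|t|^3\|A_1\|^3\exp(|t|\,\|A_1\|)$, which is trivial; alternatively, and more usefully for the induction, I would want a one-step lemma comparing $U_{\exp(tA_{2})}(\,\cdot\,)$-type conjugation against exponentiation, i.e. the statement that for self-adjoint $A,B$ (or general elements), $U_{\exp(tA)}(\exp(tB))$ agrees with $\exp(t(2A+B))$ through second order with an explicit third-order remainder controlled by $\tfrac13(2\|A\|+\|B\|)^3\exp(\cdots)$, exactly the $m$-element first-order theorem applied with the multiset $\{A,A,B\}$ and the special structure $U_{\exp}\,= M_{\exp}M_{\exp}$ on a special algebra — except here we stay abstract, so I would instead invoke the Taylor-remainder estimate of Theorem 3.1 of \cite{chehade:wang:wang:2024:suzuki} directly on the analytic function $z\mapsto Q_{S_2}(z)$.

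The inductive step is where the real work lies. Writing $Q_{S_2}^{(p)}(t)$ for the depth-$p$ quasisymmetric product on $A_1,\dots,A_{2p+1}$, I would use the identity
\[
Q_{S_2}^{(p)}(t) = U_{\exp(tA_{2p}),\,\exp(tA_{2p+1})}\bigl(Q_{S_2}^{(p-1)}(t)\bigr),
\]
so that, with $C := t\sum_{k=1}^{2p-1}A_k$ playing the role of the inner variable and $R := t(A_{2p}+A_{2p+1})$ the outer pair, I can split
\[
\exp(tB) - Q_{S_2}^{(p)}(t)
= \Bigl[\exp(tB) - U_{\exp(tA_{2p}),\exp(tA_{2p+1})}(\exp(C))\Bigr]
+ U_{\exp(tA_{2p}),\exp(tA_{2p+1})}\bigl(\exp(C) - Q_{S_2}^{(p-1)}(t)\bigr).
\]
The first bracket is a depth-one quasisymmetric error, bounded by $\tfrac{3^{0}+1}{6}|t|^3 b^3 \exp(|t|b)$-type quantity (the base estimate applied to the three-fold grouping $\{A_{2p},A_{2p+1},\text{rest}\}$, using $\sum$-of-norms monotonicity to replace the partial sum by the full $b$); the second term is estimated by pulling the operator norm out via \Cref{prop:uopmop} — $\|U_{\exp(tA_{2p}),\exp(tA_{2p+1})}\| \le 3\exp(|t|\|A_{2p}\|)\exp(|t|\|A_{2p+1}\|) \le 3\exp(|t|b)$ — times the inductive bound $\tfrac{3^{p-1}+1}{6}|t|^3 b^3\exp(|t|b)$. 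Adding, the constant becomes $\tfrac{1}{6} + \tfrac{3(3^{p-1}+1)}{6} = \tfrac{3^{p}+1}{6} + \tfrac{3-1}{6}$, so I will need to be a little careful: the clean recursion $c_p = 3 c_{p-1} + \tfrac{1}{6}$ with $c_0 = \tfrac{2}{6}$ solves to $c_p = \tfrac{3^p+1}{6}\cdot\tfrac{?}{?}$ — checking, $c_0=\tfrac13$ forces matching the stated $\tfrac{3^0+1}{6}=\tfrac13$, good, and then $c_p = 3\cdot\tfrac{3^{p-1}+1}{6}+\tfrac16 = \tfrac{3^p+3}{6}+\tfrac16 = \tfrac{3^p+4}{6}$, which overshoots $\tfrac{3^p+1}{6}$. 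So the right move is to absorb the single depth-one factor into the $U$-operator norm bound more tightly, or to note that the $3$ in \Cref{prop:uopmop} can be sharpened when one of the arguments is close to the identity; in the associative/special model $U_{\exp(tA)}$ is an honest conjugation with operator norm $\le\exp(2|t|\|A\|)$ rather than $3\exp(\cdots)$, which gives recursion $c_p = c_{p-1} + \tfrac16\cdot 3^{p-1}$ — wait, that is not it either. The cleanest resolution, and the one I would pursue, is to keep the factor $3^{p}$ by noting that each of the $p$ nested $U$-operators contributes a factor of $3$ to the accumulated operator norm in front of the single innermost third-order error, i.e. bound $\|\exp(tB)-Q_{S_2}^{(p)}\| \le \sum_{j=1}^{p}3^{\,p-j}\cdot(\text{depth-one error at level }j) + 3^{p}\cdot 0$, and then dominate each depth-one error by $\tfrac16|t|^3 b^3\exp(|t|b)$ so the geometric sum $\sum_{j=1}^p 3^{p-j} = \tfrac{3^p-1}{2}$; this gives constant $\tfrac{3^p-1}{12}$, which is $\le \tfrac{3^p+1}{6}$. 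So in fact the stated bound is not tight and any of these routes establishes it; I would present the telescoping-sum version since it avoids re-deriving the recursion and makes the $3^p$ transparent.

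The main obstacle is the bookkeeping of constants and, more substantively, making sure the ``same second-order Taylor polynomial'' claim for $Q_{S_2}$ is actually justified in the non-associative setting — i.e. that $\frac{d}{dz}\big|_{0}Q_{S_2}(z) = \sum A_k$ and $\frac{d^2}{dz^2}\big|_0 Q_{S_2}(z) = (\sum A_k)^2$. This requires expanding $U_{\exp(zA),\exp(zC)}(\exp(zB)) = \{\exp(zA),\exp(zB),\exp(zC)\}$ to second order and using the Jordan identities (bilinearity of the triple product, $\{I,X,I\}=X$, and the linearization of the quadratic relation) to collapse the first- and second-order terms to the symmetric form; the symmetrization built into the definition \eqref{notion:QS2} — pairing $A_{2k}$ with $A_{2k+1}$ around the center $A_1$ — is precisely what kills the first-order ``commutator-type'' obstruction that would otherwise appear, but verifying this cleanly without an associative embedding (which, as the remark after \Cref{thm:Trotter:first:JordanBanach} stresses, is unavailable) is the delicate point. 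I would handle it by an explicit low-order computation in the free Jordan algebra, or by citing the analogous computation from \cite{wang:2023:suzuki} (Theorem 2.3) adapted to Jordan products, since the combinatorics of the expansion only use bilinearity and commutativity of $\circ$ together with the definition of $\{\cdot,\cdot,\cdot\}$, all of which hold verbatim. Once that is in hand, the quantitative estimate is the Taylor-remainder bound of Theorem 3.1 of \cite{chehade:wang:wang:2024:suzuki} applied to the analytic $\Al$-valued function $Q_{S_2}$, combined with the telescoping/induction above, and the $O(t^3)$ statement is then immediate.
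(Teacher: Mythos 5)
Your overall framing is right --- both $\exp\left(t\sum_k A_k\right)$ and $Q_{S_2}(t)$ share the second-order Taylor polynomial $\widetilde{T}_p(t)$, and the $O(t^3)$ statement follows once a quantitative bound is in place --- but the quantitative part of your argument does not close. Writing $b=\sum_{k=1}^{2p+1}\Vert A_k\Vert$ as in your notation, your telescoping/induction through the intermediate exact exponentials $E_j=\exp\bigl(t\sum_{k=1}^{2j+1}A_k\bigr)$ rests on a ``depth-one error'' of size $\tfrac16\vert t\vert^3b^3\exp(\vert t\vert b)$, and that constant is wrong: a single step $\bigl\Vert U_{\exp(tA),\exp(tC)}(\exp(tB))-\exp(t(A+B+C))\bigr\Vert$ is exactly the $p=1$ case of the theorem, whose constant is $\tfrac{3^1+1}{6}=\tfrac23$ (one sixth from the exponential's own Taylor remainder plus $3\cdot\tfrac16$ from the three terms of the triple product), not $\tfrac16$. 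With the correct constant, your geometric sum $\sum_{j=1}^{p}3^{p-j}=\tfrac{3^p-1}{2}$ yields $\tfrac23\cdot\tfrac{3^p-1}{2}=\tfrac{2\cdot 3^p-2}{6}$, which exceeds the claimed $\tfrac{3^p+1}{6}$ for every $p\ge 2$; so the route you settle on proves a strictly weaker estimate, and your remark that ``the stated bound is not tight'' is an artifact of the miscounted constant. You had in fact already detected the problem when your recursion produced $\tfrac{3^p+4}{6}$, but none of the proposed repairs (sharpening $\Vert U_{A,C}\Vert\le 3\Vert A\Vert\Vert C\Vert$, or the $\tfrac16$ depth-one bound) is justified in the abstract Jordan-Banach setting.

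The fix is to avoid passing through the intermediate exact exponentials altogether. The paper compares both quantities to the common polynomial $\widetilde{T}_p(t)$ exactly once: $\Vert\exp(tB)-\widetilde{T}_p(t)\Vert\le\tfrac16\vert t\vert^3b^3\exp(\vert t\vert b)$ by the scalar Taylor remainder, while for $\Vert Q_{S_2}(t)-\widetilde{T}_p(t)\Vert$ one shows by induction on $j$, starting from the three-term expansion \Cref{eqt:notion:W1} of $W_1$, that the sum of the norms of all degree-$s$ Taylor terms of $W_p(t)$ is at most $3^{p}$ times the degree-$s$ term of $\exp(\vert t\vert b)$, whence the tail is at most $\tfrac{3^p}{6}\vert t\vert^3b^3\exp(\vert t\vert b)$; adding gives $\tfrac{3^p+1}{6}$ on the nose. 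Your concern about justifying the ``same second-order Taylor polynomial'' claim without an associative embedding is legitimate, and it is resolved exactly as you suggest: \Cref{prop:Trotter:first:JordanBanach} applied to the explicit three-term form of $W_1$, then induction on $j$, using only bilinearity and commutativity of $\circ$.
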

\begin{proof}
For any $1\leq j\leq p,$ define
\begin{align}\label{notion:Qj}
W_{j}(t)&=U_{\exp(tA_{2j}), \exp(tA_{2j+1})}\cdots U_{\exp(tA_2), \exp(tA_3)}(exp(tA_1)) \\
\label{notion:Vj}
 \widetilde{T_j}(t)&=I+t\left(A_1+A_2+\cdots+A_{2j+1}\right)+\frac{t^2}{2!}\left(A_1+A_2+\cdots+A_{2j+1}\right)^2.
\end{align}
Note that 
\begin{align*}
\left\Vert\exp\left(t\sum_{k=1}^{2p+1} A_k\right)-Q_{S_2}(t)\right \Vert\leq 	\left\Vert\exp\left(t\sum_{k=1}^{2p+1} A_k\right)- \widetilde{T_p}(t)\right\Vert +\left\Vert Q_{S_2}(t)-\widetilde{T_p}(t)\right\Vert. 
\end{align*}

By a similar argument as in the proof of \cite[Theorem 3.3(i)]{chehade:wang:wang:2024:suzuki}, with $\displaystyle \frac{1}{n}$ replaced by $t$, we obtain
\begin{align*}
\left\Vert\exp\left(t\sum_{k=1}^{2p+1} A_k\right)- \widetilde{T_p}(t)\right\Vert\leq \frac{1}{6}\vert t\vert^3\left(\sum_{k=1}^{2p+1} \Vert A_k\Vert \right)^3\exp\left(\vert t\vert\sum_{k=1}^{2p+1} \Vert A_k\Vert\right).	
\end{align*}

Since 
\begin{align}
W_1(t)=[\exp(tA_2)\circ\exp(tA_1)]\circ \exp(tA_3)+ [\exp(tA_1)\circ \exp(tA_3)] \circ \exp(tA_2) \nonumber
\\ -[\exp(tA_3)\circ\exp(tA_2 )]\circ \exp(tA_1), \label{eqt:notion:W1}	
\end{align}
it follows that  \Cref{prop:Trotter:first:JordanBanach} implies that  the second order Taylor polynomial of $W_1(t)$ is $ \widetilde{T_1}(t).$ By similar computation, the second order Taylor polynomial for $W_j(t)$ is $ \widetilde{T_j}(t)$ for any $1\leq j\leq p$.

According to \Cref{eqt:notion:W1}, for any integer $s>2,$ the norm of the sum of all terms of degree $s$ of Taylor expansion of $W_1(t)\, \leq$ the sum of all terms of degree $s$ of Taylor expansion of $\displaystyle 3\cdot \exp\left(\vert t\vert\sum_{k=1}^3\Vert A_k\Vert\right ).$ 
  $W_1(t).$ Let $P_s(t)$ be the sum of all terms of degree $s$ of Taylor expansion of $W_1(t)$. Then
	\begin{align*}
		\left\Vert W_1(t)-\widetilde{T_1}(t)\right\Vert &=\left\Vert \sum_{s>2} P_s(t)\right \Vert \leq 3\left[\exp \left(\vert t\vert \sum_{k=1}^3 \Vert A_k\Vert \right)-\left(I+\vert t\vert \sum_{k=1}^3 \Vert A_k\Vert+\frac{t^2}{2!}\left(\sum_{k=1}^3 \Vert A_k\Vert\right)^2 \right) \right]
	\end{align*}

More generally, we obtain 
\begin{align*}
\Vert Q_{S_2}(t)-\widetilde{T_p}(t)\Vert&=\Vert W_p(t)-\widetilde{T_p}(t)\Vert\\
&\leq 3^p\left[\exp \left(\vert t\vert \sum_{k=1}^{2p+1} \Vert A_k\Vert \right)-\left(I+\vert t\vert \sum_{k=1}^{2p+1} \Vert A_k\Vert+\frac{t^2}{2!}\left(\sum_{k=1}^{2p+1} \Vert A_k\Vert\right)^2 \right) \right]\\
&\leq \frac{3^p}{3!}\vert t\vert^3\left(\sum_{k=1}^{2p+1} \Vert A_k\Vert\right)^3\exp\left(\sum_{k=1}^{2p+1} \Vert A_k\Vert\right).
\end{align*}
Thus, the desired result follows directly.
\end{proof}

\begin{cor}\label{coro:Trotter:second:JordanBanach}
Let $A_1, A_2,\cdots, A_m$ be any finite number of elements in a unital Jordan-Banach algebra $\mathcal{A}$ and $t\in \Rdb$.
Then 
\begin{align*}
\left\Vert\exp\left(t\sum_{k=1}^m A_k\right)-S_2(t)\right \Vert \leq  \frac{3^{m-1}+1}{6}\vert t\vert^3\left(\sum_{k=1}^m\Vert A_k\Vert\right)^3\exp\left(\vert t\vert\sum_{k=1}^m \Vert A_k\Vert\right)
\end{align*}
and
\begin{align*}
 \exp\left(t\sum_{k=1}^m A_k\right)=S_2(t)+O(t^3).   
\end{align*}
\end{cor}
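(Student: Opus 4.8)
The plan is to deduce this from \Cref{thm:Trotter:quasisymmetric:JordanBanach} by recognizing the symmetric (Strang-type) product $S_2(t)$ as an instance of the quasi-symmetric product $Q_{S_2}$ applied to a longer, reweighted list of elements. For $m=1$ the inequality is trivial, since then $S_2(t)=\exp(tA_1)=\exp(t\sum_{k=1}^m A_k)$ and the left-hand side vanishes, so assume $m\ge 2$ and put $p:=m-1\ge 1$. First I would introduce the auxiliary elements $C_1,\dots,C_{2p+1}\in\Al$ defined by
\[
C_1:=A_1,\qquad C_{2j}:=C_{2j+1}:=\tfrac12 A_{j+1}\quad(1\le j\le p),
\]
so that the list $(C_k)_{k=1}^{2p+1}$ has odd length $2p+1=2m-1$.

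Next I would record the three elementary identities that drive the reduction. Since $U_{B,B}=U_B$ for every $B\in\Al$, the $j$-th layer of $Q_{S_2}(t)$ formed from $(C_k)$ is $U_{\exp(tC_{2j}),\,\exp(tC_{2j+1})}=U_{\exp(\frac{t}{2}A_{j+1}),\,\exp(\frac{t}{2}A_{j+1})}=U_{\exp(\frac{t}{2}A_{j+1})}$, and the innermost argument is $\exp(tC_1)=\exp(tA_1)$; composing the layers $j=1,\dots,p$ therefore gives exactly $Q_{S_2}(t)=U_{\exp(\frac{t}{2}A_m)}\cdots U_{\exp(\frac{t}{2}A_2)}(\exp(tA_1))=S_2(t)$ for the original list $(A_k)$. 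Moreover, by the telescoping $C_{2j}+C_{2j+1}=A_{j+1}$ one gets $\sum_{k=1}^{2p+1}C_k=A_1+\sum_{j=1}^{p}A_{j+1}=\sum_{k=1}^m A_k$, hence $\exp\!\big(t\sum_k C_k\big)=\exp\!\big(t\sum_k A_k\big)$; and likewise $\norm{C_{2j}}+\norm{C_{2j+1}}=\norm{A_{j+1}}$ gives $\sum_{k=1}^{2p+1}\norm{C_k}=\sum_{k=1}^m\norm{A_k}$.

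Then I would simply invoke \Cref{thm:Trotter:quasisymmetric:JordanBanach} for the $2p+1$ elements $C_1,\dots,C_{2p+1}$ and the parameter $t$, which yields
\[
\norm{\exp\Big(t\sum_{k=1}^{2p+1}C_k\Big)-Q_{S_2}(t)}\le \frac{3^{p}+1}{6}\,\abs{t}^3\Big(\sum_{k=1}^{2p+1}\norm{C_k}\Big)^3\exp\Big(\abs{t}\sum_{k=1}^{2p+1}\norm{C_k}\Big),
\]
and substituting the three identities above together with $p=m-1$ turns this into precisely the claimed bound. The asymptotic clause $\exp\!\big(t\sum_k A_k\big)=S_2(t)+O(t^3)$ is then immediate from this inequality (or, alternatively, from the corresponding clause of \Cref{thm:Trotter:quasisymmetric:JordanBanach} transported through the same substitution).

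I do not anticipate a serious obstacle: the entire content is the observation that $S_2$ on $(A_1,\dots,A_m)$ is $Q_{S_2}$ on the ``doubled'' list with halved weights, and that this substitution leaves both $\exp(t\sum A_k)$ and the quantity $\sum\norm{A_k}$ unchanged while converting $\tfrac{3^p+1}{6}$ into $\tfrac{3^{m-1}+1}{6}$. The only points needing care are the index bookkeeping for $(C_k)$ (so that $2p+1=2m-1$ and the layers read off $A_2,\dots,A_m$ in the correct order) and verifying that $U_{B,B}=U_B$ is applied exactly where $C_{2j}=C_{2j+1}$. If a self-contained argument were preferred instead, one could repeat the proof of \Cref{thm:Trotter:quasisymmetric:JordanBanach} verbatim with each two-slot operator replaced by the one-slot operator $U_{\exp(\frac t2 A_{j+1})}$ and the bound $\norm{U_{\exp(\frac t2 A_{j+1})}}\le 3\exp(\abs t\,\norm{A_{j+1}})$ from \Cref{prop:uopmop}.
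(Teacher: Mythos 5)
Your proposal is correct and is essentially the paper's own proof: the paper likewise obtains the corollary by applying \Cref{thm:Trotter:quasisymmetric:JordanBanach} to the doubled sequence $B_1=A_1,\ B_2=B_3=A_2/2,\dots,B_{2m-2}=B_{2m-1}=A_m/2$, using $U_{B,B}=U_B$ so that $Q_{S_2}$ on this list coincides with $S_2$ on the original one. Your write-up just makes explicit the bookkeeping (preservation of $\sum_k A_k$ and $\sum_k\Vert A_k\Vert$, and $p=m-1$) that the paper leaves implicit.
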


\begin{proof}
 The result is obtained by applying \Cref{thm:Trotter:quasisymmetric:JordanBanach} to the sequence $\displaystyle \{B_1, B_2,\dots, B_{2m-1}\}$ where $B_1=A_1, B_2=B_3=A_2/2,\cdots, B_{2m-2}=B_{2m-1}=A_m/2.$
\end{proof}

In the context of JB*-algebras, we establish sharp error bounds for Trotter product formulas involving unitary elements.
\begin{thm}\label{thm:Trotter:first:JB*}
    For any finite number of self-adjoint elements $A_1, A_2,\cdots, A_m$ in a unital JB*-algebra $\mathcal{A},$ 
    \begin{align}\label{ineq:Trotter:first:JB*}
    		\left\Vert \exp \left( i\bigg(\sum_{k=1}^m A_k\bigg)\right)-\left[J_2\left(\frac{i}{n}\right)\right]^n\right \Vert	
    		\leq
    		\dfrac{1}{3n^2}\left( \sum_{k=1}^m \Vert A_k\Vert\right)^3\exp\left(\frac{1}{n}\sum_{k=1}^m {\Vert A_k\Vert}\right).	
    \end{align} 
\end{thm}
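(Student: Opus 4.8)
The plan is to transcribe the proof of \Cref{thm:Trotter:first:JordanBanach} with the real parameter $1/n$ replaced by the purely imaginary number $i/n$, inserting one extra fact that is special to the JB*-setting: a unitary in a unital JB*-algebra has norm one. Write $E=G(i/n)$, $F=J_2(i/n)$ and $T_2=T_2(i/n)$, so that $\exp(i\sum_k A_k)-[J_2(i/n)]^n=E^n-F^n$, and it suffices to bound $\|E^n-F^n\|$.

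First I would establish the sharp norm bounds $\|E\|=1$ and $\|F\|\le 1$. Since each $A_k$ is self-adjoint, so are $\tfrac1n A_k$ and $\tfrac1n\sum_k A_k$, hence by the remark following \Cref{defn:JB*} the elements $\exp(\tfrac in A_k)$ and $E=\exp(\tfrac in\sum_k A_k)$ are unitary. For any unitary $U$ the defining relations $U\circ U^*=I$ and $U^2\circ U^*=U$ give $\{U,U^*,U\}=(U\circ U^*)\circ U+(U^*\circ U)\circ U-(U\circ U)\circ U^*=2U-U=U$, so the JB*-identity $\|\{U,U^*,U\}\|=\|U\|^3$ forces $\|U\|^3=\|U\|$ and thus $\|U\|=1$. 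In particular $\|E\|=1$, and applying $\|M_A\|\le\|A\|$ from \Cref{prop:uopmop} to the definition \eqref{notion:J2} yields $\|F\|\le\prod_{k=2}^m\|\exp(\tfrac in A_k)\|\cdot\|\exp(\tfrac in A_1)\|=1$.

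Next I would run the telescoping identity exactly as in \Cref{thm:Trotter:first:JordanBanach}: $E^n-F^n=\sum_{j=0}^{n-1}M_F^j\big((E-F)\circ E^{n-1-j}\big)$, which is purely Jordan-algebraic and hence still valid, so that $\|E^n-F^n\|\le n\,\|E-F\|\,(\max\{\|E\|,\|F\|\})^{n-1}=n\,\|E-F\|\le n(\|E-T_2\|+\|F-T_2\|)$ by the previous step; the collapse of the exponential factor to $1$ is exactly what sharpens the $\exp(\sum_k\|A_k\|)$ of the real case to $\exp(\tfrac1n\sum_k\|A_k\|)$. For the two remainder terms I would use that $G(z)$ and $J_2(z)$ are $\mathcal A$-valued entire functions of $z$ whose second-order Taylor polynomials both equal $T_2(z)$ (this is \Cref{prop:Trotter:first:JordanBanach}, read as an identity of Taylor coefficients, which are fixed elements of $\mathcal A$ and do not see whether $z$ is real or complex); their degree-$s$ coefficients have norm at most $(\sum_k\|A_k\|)^s/s!$, for $G$ because $\|(\sum_k A_k)^s\|\le(\sum_k\|A_k\|)^s$ and for $J_2$ by multiplying out the power series of the $\exp(zA_k)$ using $\|A\circ B\|\le\|A\|\|B\|$. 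Hence \eqref{ineq:E-T2:error} and \eqref{ineq:F-T2:error}, whose right-hand sides involve only $|i/n|=1/n$, hold verbatim, and plugging them in gives $\|E^n-F^n\|\le n\cdot\frac{2}{3!\,n^3}(\sum_k\|A_k\|)^3\exp(\tfrac1n\sum_k\|A_k\|)=\frac{1}{3n^2}(\sum_k\|A_k\|)^3\exp(\tfrac1n\sum_k\|A_k\|)$, which is \eqref{ineq:Trotter:first:JB*}.

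The only non-mechanical ingredient is the norm-one property of unitaries, together with the observation that it — and not the substitution $1/n\mapsto i/n$, which changes none of the modulus estimates — is what upgrades the crude bound of \Cref{thm:Trotter:first:JordanBanach} to the sharp one; everything else is a faithful repetition of that argument. The one place to be slightly careful is verifying that the Taylor-coefficient bound for $J_2$ is genuinely independent of whether $z$ is real or complex, but this is immediate from submultiplicativity of the Jordan product.
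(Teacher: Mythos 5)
Your proposal is correct and follows essentially the same route as the paper: the telescoping identity from \Cref{thm:Trotter:first:JordanBanach} combined with the norm-one property of unitaries (which collapses the factor $(\max\{\Vert E\Vert,\Vert F\Vert\})^{n-1}$ to $1$) and the same Taylor-remainder bounds for $G(i/n)$ and $J_2(i/n)$. You in fact supply a detail the paper leaves implicit, namely the derivation $\{U,U^*,U\}=U$ and hence $\Vert U\Vert=1$ for a unitary $U$.
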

\begin{proof}
	Let
\[ C=G\left(\frac{i}{n}\right)\quad \mbox{and}\quad  
		D=J_2\left(\frac{i}{n}\right).\]
Then
\[\Vert C\Vert\leq 1 \quad \mbox{and}\quad 
		\Vert D\Vert\leq 1. \]

Using method for 
 $E$ and $F$ in the proof of \Cref{thm:Trotter:first:JordanBanach} above, 
	\begin{align}
	\left\Vert C^n-D^n \right\Vert 
        \leq n \Vert C-D\Vert 
		\leq n \left(\left\Vert C-T_2\left(\frac{i}{n}\right)\right\Vert +\left\Vert D-T_2\left(\frac{i}{n}\right)\right\Vert\right).\label{ieqt:s:upper:s}
	\end{align}
 
Using an argument similar to the one in the proof of \Cref{thm:Trotter:first:JordanBanach}, we have 
	\begin{align}
		\left\Vert C-T_2\left(\frac{i}{n}\right)\right\Vert
&\leq \frac{1}{3!\cdot n^3}\left(\sum_{k=1}^m \Vert A_k\Vert \right)^3\exp \left(\frac{1}{n}\sum_{k=1}^m \Vert A_k\Vert \right) \label{ieqt:upper:c-t2} \\
		\left\Vert D-T_2\left(\frac{i}{n}\right)\right\Vert 
		&\leq \frac{1}{3!\cdot n^3}\left(\sum_{k=1}^m \Vert A_k\Vert \right)^3\exp \left(\frac{1}{n}\sum_{k=1}^m \Vert A_k\Vert \right)  \label{ieqt:upper:d-t2}.
	\end{align}

Combining \cref{ieqt:s:upper:s}, \cref{ieqt:upper:c-t2} and \cref{ieqt:upper:d-t2}, we get the desired result. 
\end{proof}

The following result extends \cite[Theorem 3.3]{chehade:wang:wang:2024:suzuki} to the setting of JB*-algebras for unitaries, providing a sharp bound.  
   
\begin{thm}\label{thm:Trotter:second:JB*}
	For any finite number of self-adjoint elements $A_1, A_2,\cdots, A_m$ in a unital JB*-algebra $\mathcal{A},$
	\begin{align*}
	\left\Vert \exp \left(i \sum_{j=1}^m A_j\right)-\left[S_2\left(\frac{i}{n}\right)\right]^n\right \Vert	
		\leq
		\frac{3^{m-1}+1}{6n^2}\left( \sum_{k=1}^m \Vert A_k\Vert\right)^3\exp\left(\frac{1}{n}\sum_{k=1}^m \Vert A_k\Vert\right).
  \end{align*}
\end{thm}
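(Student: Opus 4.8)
The plan is to follow the same telescoping-plus-Taylor strategy used in \Cref{thm:Trotter:first:JordanBanach} and \Cref{thm:Trotter:first:JB*}, but now feeding in the second-order building block $S_2$ rather than the first-order block $J_2$. First I would set $C = G\!\left(\frac{i}{n}\right)$ and $D = S_2\!\left(\frac{i}{n}\right)$. Since $A_1,\dots,A_m$ are self-adjoint, each $\exp\!\left(\frac{i}{2n}A_k\right)$ is unitary, and in a JB*-algebra the operators $U_{\exp(\frac{i}{2n}A_k)}$ are isometric on the relevant elements; hence $\|D\|\le 1$, and of course $\|C\| = \|\exp(i\sum A_j)\| = 1$. (If one is squeamish about the isometry claim for unitaries acting via $U$, one can instead use the crude bound $\|U_{\exp(\frac{i}{2n}A_k)}\|\le 3\|\exp(\frac{i}{2n}A_k)\|^2$ from \Cref{prop:uopmop} together with $\|\exp(\frac{i}{2n}A_k)\|\le \exp(\frac{1}{2n}\|A_k\|)$, which still gives the product of exponentials appearing on the right-hand side; the cleaner unitary bound is what makes this a \emph{sharp} estimate.)

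Next I would run the same telescoping identity for $C^n - D^n$ as in \Cref{thm:Trotter:first:JordanBanach}:
\begin{align*}
C^n - D^n = \sum_{j=0}^{n-1} M_D^j\big((C-D)\circ C^{n-1-j}\big),
\end{align*}
which, using $\|C\|\le 1$, $\|D\|\le 1$ and $\|M_D\|\le\|D\|\le 1$, yields $\|C^n - D^n\| \le n\,\|C-D\|$. Then I would insert the common second-order Taylor polynomial $T_2\!\left(\frac{i}{n}\right)$: by \Cref{prop:Trotter:first:JordanBanach}-type reasoning (extended to $S_2$ via \Cref{coro:Trotter:second:JordanBanach}, i.e.\ $\exp(t\sum A_k) = S_2(t) + O(t^3)$ with matching quadratic Taylor polynomial $T_2(t)$), both $C$ and $D$ agree with $T_2\!\left(\frac{i}{n}\right)$ through order two, so
\begin{align*}
\|C-D\| \le \left\Vert C - T_2\!\left(\tfrac{i}{n}\right)\right\Vert + \left\Vert D - T_2\!\left(\tfrac{i}{n}\right)\right\Vert.
\end{align*}
The first term is bounded exactly as in \Cref{thm:Trotter:first:JB*} by $\frac{1}{3!\,n^3}(\sum\|A_k\|)^3\exp(\frac{1}{n}\sum\|A_k\|)$. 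For the second term I would invoke the $S_2$-analogue of the estimate proved in \Cref{coro:Trotter:second:JordanBanach} / \Cref{thm:Trotter:quasisymmetric:JordanBanach}, evaluated at $t = \frac{i}{n}$: the same degree-counting argument on the Taylor expansion of the nested $U$-operators gives a factor $3^{m-1}$ multiplying $\frac{1}{3!\,n^3}(\sum\|A_k\|)^3\exp(\frac{1}{n}\sum\|A_k\|)$. Adding the two contributions produces $\frac{3^{m-1}+1}{3!\,n^3}(\sum\|A_k\|)^3\exp(\frac{1}{n}\sum\|A_k\|)$, and multiplying by the factor $n$ from the telescoping step gives exactly $\frac{3^{m-1}+1}{6n^2}(\sum\|A_k\|)^3\exp(\frac{1}{n}\sum\|A_k\|)$, as claimed.

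The main obstacle I anticipate is justifying the norm bound $\|D\| \le 1$ — more precisely, checking that when $A_k$ is self-adjoint the map $B \mapsto U_{\exp(\frac{i}{2n}A_k)}(B)$ does not increase the norm of the particular iterated product being built up in $S_2\!\left(\frac{i}{n}\right)$. This is where the JB*-structure is genuinely needed (as opposed to the plain Jordan-Banach estimate), and it rests on the fact that for a unitary $U$ in a JB*-algebra the operator $U_U$ is a surjective isometry (an order/Jordan automorphism up to the involution), so that iterating it keeps norms at $1$; I would cite the relevant fact from \cite{alfsen:2003:geometry} or \cite{hanche:1984:jordan} about unitaries and $U$-operators in JB*-algebras. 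A secondary, purely bookkeeping point is making sure the degree-$s$ term count in the Taylor expansion of the $m$-fold nested $U_{\exp(\cdot)}$ product genuinely yields the constant $3^{m-1}$ rather than $3^m$; this is handled exactly as in the proof of \Cref{thm:Trotter:quasisymmetric:JordanBanach} after the reindexing trick of \Cref{coro:Trotter:second:JordanBanach}, since $S_2$ with $m$ elements corresponds to $Q_{S_2}$ with $2m-1$ elements, i.e.\ $p = m-1$.
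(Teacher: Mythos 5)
Your proposal is correct and follows essentially the same route as the paper: bound $\Vert C\Vert,\Vert D\Vert\le 1$ using unitarity in the JB*-algebra, telescope to get $\Vert C^n-D^n\Vert\le n\Vert C-D\Vert$, insert the common second-order Taylor polynomial $T_2\left(\frac{i}{n}\right)$, and bound the two remainders by $\frac{1}{3!\,n^3}$ and $\frac{3^{m-1}}{3!\,n^3}$ times $\left(\sum_k\Vert A_k\Vert\right)^3\exp\left(\frac{1}{n}\sum_k\Vert A_k\Vert\right)$ via the degree-counting argument of \Cref{thm:Trotter:quasisymmetric:JordanBanach} and \Cref{coro:Trotter:second:JordanBanach}. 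Your explicit justification of $\Vert S_2\left(\frac{i}{n}\right)\Vert\le 1$ via the isometry of $U_u$ for unitary $u$ is a point the paper leaves implicit, and your remark that the crude bound from \Cref{prop:uopmop} would not yield the sharp constant is accurate.
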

\begin{proof}
	Let $\displaystyle 
		H=S_2\left(\frac{i}{n}\right).$ Then $\displaystyle \Vert H\Vert\leq 1.$
	 Following the approach in the proof of \Cref{thm:Trotter:first:JB*},
	\begin{align}
	\left\Vert G^n-H^n \right\Vert
        & \leq n(\Vert G-T_2\Vert + \Vert H-T_2\Vert) \label{ieqt:s:upper:t}.
	\end{align}
	
Using a similar argument as in the proof of \Cref{coro:Trotter:second:JordanBanach}, we obtain
\begin{align}
		\left\Vert C-T_2\left(\frac{i}{n}\right)\right\Vert
		&\leq \frac{1}{3!\cdot n^3}\left(\sum_{k=1}^m \Vert A_k\Vert \right)^3\exp \left(\frac{1}{n}\sum_{k=1}^m \Vert A_k\Vert \right) \label{ieqt:upper:g-t2} \\
		\left\Vert H-T_2\left(\frac{i}{n}\right)\right\Vert 
		&\leq 3^{m-1} \frac{1}{3!\cdot n^3}\left(\sum_{k=1}^m \Vert A_k\Vert \right)^3\exp \left(\frac{1}{n}\sum_{k=1}^m \Vert A_k\Vert \right)\label{ieqt:upper:h-t2}.
	\end{align}	
	Combining \cref{ieqt:s:upper:t}, \cref{ieqt:upper:g-t2} and \cref{ieqt:upper:h-t2}, we derive the desired result.
	\end{proof}

\section{Higher-Order Jordan-Trotter Product Formulas: Third- Order and Beyond}\label{sec:highOrder}
In this section, we consider higher order Jordan-Trotter product formulas, particularly focusing on third-order and higher formulations. These advanced formulas extend the conventional second-order Jordan-Trotter formulas in the previous sections, offering improved accuracy for simulating complex quantum systems. 

\subsection{Third-Order Jordan-Trotter Product Formulas}
Let $A, B$ be elements in a unital Jordan-Banach algebra $\Al.$ Then
$J_2(t),$ as defined in \cref{notion:J2}, is given by
\[
J_2(t)=\exp(tA)\circ \exp(tB),	
\]
and $S_2(t),$ as defined in \cref{notion:S2}, is given by 
\[
S_2(t)=	U_{\exp\left(\frac{t}{2}A \right)}(\exp\left(tB\right))
\]
We also define
\begin{align} 
\widetilde{S_2}(t)&:=U_{\exp\left(\frac{t}{2}B \right)}(\exp\left(tA\right)) \nonumber\\
Q_3(t)&:=\frac{2}{3}S_2(t)+\frac{2}{3}\widetilde{S_2}(t)-\frac{1}{3}J_2(t)\label{notion:Q3}\\
T^o_3(t)&:=	I + t(A + B) + t^2\frac{(A+B)^2}{2} + t^3\frac{(A+B)^3}{6} \nonumber \\
T^s_3(t)&:=I + t(A + B) + t^2\frac{(A+B)^2}{2}+t^3\left(\frac{A^3+B^3}{6}+\frac{A\circ B^2+(A\circ B)\circ A}{2}\right)\nonumber\\
\widetilde{T^s_3}(t)&:=I + t(A + B) + t^2\frac{(A+B)^2}{2}+t^3\left(\frac{A^3+B^3}{6}+\frac{B\circ A^2+(B\circ A)\circ B}{2}\right)\nonumber\\
T^j_3(t)&:=I + t(A + B) + t^2\frac{(A+B)^2}{2}+t^3\left(\frac{A^3+B^3}{6}+\frac{A\circ B^2+A^2\circ B}{2}\right)\nonumber
\end{align}

Using a linear combination of second-order Jordan-Trotter product formulas, we derive a third-order Jordan-Trotter product formula.
\begin{thm}\label{thm:3order:product:JordanBanach}
Let $A, B$ be two elements in a unital Jordan-Banach algebra $\Al$. Then 
\begin{align}\label{eqt:3order:lc}
 \exp\left(t(A+B)\right)=Q_3(t)+O(t^4).
\end{align}
Moreover, 
\begin{align}\label{ineqt:3order:error}
\Vert \exp\left(t(A+B)\right)-Q_3(t)\Vert \leq \frac{2}{9}\vert t\vert^4(\Vert A\Vert+\Vert B\Vert)^4\exp\big(\vert t\vert(\Vert A\Vert+\Vert B\Vert) \big)	
\end{align}
\end{thm}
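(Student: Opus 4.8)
The plan is to reduce the claim to a Taylor-polynomial comparison argument of exactly the type used for the second-order results, but carried out one order higher.

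The first step is to identify the third-order Taylor polynomial of each building block. By \Cref{prop:Trotter:first:JordanBanach} (applied with $m=2$), $J_2(t) = \exp(tA)\circ\exp(tB)$ has second-order Taylor polynomial $T_2(t)$; expanding one more order, a direct computation gives that its degree-$3$ term is $t^3\big(\tfrac{A^3+B^3}{6} + \tfrac{A^2\circ B + A\circ B^2}{2}\big)$, i.e. $J_2(t) = T^j_3(t) + O(t^4)$. Similarly, expanding $S_2(t) = U_{\exp(\frac t2 A)}(\exp(tB))$ and $\widetilde{S_2}(t) = U_{\exp(\frac t2 B)}(\exp(tA))$ to third order — using $U_{\exp(\frac t2 A)}(X) = X + t(A\circ X) + \tfrac{t^2}{2}\{A,X,A\}_{\text{via }U} + \cdots$ and the special-case identity $U_C(D) = \{C,D,C\}$ — yields $S_2(t) = T^s_3(t) + O(t^4)$ and $\widetilde{S_2}(t) = \widetilde{T^s_3}(t) + O(t^4)$. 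The point of the coefficients $\tfrac23,\tfrac23,-\tfrac13$ is that $\tfrac23 T^s_3 + \tfrac23 \widetilde{T^s_3} - \tfrac13 T^j_3$ collapses to $T^o_3(t)$: the degree-$\le 2$ parts are identical in all three (so they combine with total weight $\tfrac23+\tfrac23-\tfrac13 = 1$), while the degree-$3$ correction terms satisfy $\tfrac23\big(\tfrac{A\circ B^2 + (A\circ B)\circ A}{2}\big) + \tfrac23\big(\tfrac{B\circ A^2 + (B\circ A)\circ B}{2}\big) - \tfrac13\big(\tfrac{A^2\circ B + A\circ B^2}{2}\big)$ must be checked to equal $\tfrac{(A+B)^3 - A^3 - B^3}{6}$. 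Since $T^o_3(t)$ is precisely the third-order Taylor polynomial of $\exp(t(A+B))$, this proves \cref{eqt:3order:lc}.

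The second step is the quantitative bound \cref{ineqt:3order:error}. Write
\[
\exp(t(A+B)) - Q_3(t) = \big(\exp(t(A+B)) - T^o_3(t)\big) - \tfrac23\big(S_2(t) - T^s_3(t)\big) - \tfrac23\big(\widetilde{S_2}(t) - \widetilde{T^s_3}(t)\big) + \tfrac13\big(J_2(t) - T^j_3(t)\big),
\]
and bound each of the four tail terms by the norm of the tail of a dominating scalar exponential series, exactly as in the proof of \Cref{thm:Trotter:quasisymmetric:JordanBanach}: the tail of $\exp(t(A+B))$ beyond degree $3$ is at most $\tfrac{|t|^4}{4!}(\|A\|+\|B\|)^4\exp(|t|(\|A\|+\|B\|))$; for $S_2$ and $\widetilde S_2$ one uses $\|U_C\|\le 3\|C\|^2$ from \Cref{prop:uopmop}, so the tail of $S_2(t)$ beyond degree $3$ is dominated by the corresponding tail of $3\exp(|t|(\|A\|+\|B\|))$; and for $J_2$ submultiplicativity of $\circ$ gives a bound by the tail of $\exp(|t|(\|A\|+\|B\|))$ itself. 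Adding with weights $1, \tfrac23, \tfrac23, \tfrac13$ gives a constant $\big(\tfrac1{24} + 3\cdot\tfrac23\cdot\tfrac16 + \tfrac13\cdot\tfrac16\big)|t|^4(\|A\|+\|B\|)^4\exp(|t|(\|A\|+\|B\|))$; one then checks this constant is $\le \tfrac29$ (indeed $\tfrac1{24} + \tfrac13 + \tfrac1{18} = \tfrac{3}{72}+\tfrac{24}{72}+\tfrac{4}{72} = \tfrac{31}{72} < \tfrac{16}{72}$... so the crude constant must be sharpened), which means the degree-$3$ terms of $S_2,\widetilde S_2, J_2$ must be \emph{included} in the cancellation rather than discarded, i.e. one should bound only the genuine degree-$\ge 4$ tails after the exact identity $\tfrac23 T^s_3 + \tfrac23\widetilde{T^s_3} - \tfrac13 T^j_3 = T^o_3$ from Step 1 is invoked.

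The main obstacle is the degree-$3$ bookkeeping in Step 1: one must carefully expand the nested Jordan products $U_{\exp(\frac t2 A)}(\exp(tB))$ to third order — in particular getting the $U$-operator's Taylor expansion right, since $U_{\exp(sA)} = \exp(2s M_A)$... no: rather $U_{\exp(sA)}(X) = \exp(sA)\circ(\exp(sA)\circ X)$ only in the special case, so in general one expands $U_{C}(X) = \{C,X,C\}$ bilinearly in $C=\exp(\frac t2 A)$ — and then verify the nonassociative identity that makes the weighted combination of the three degree-$3$ correction terms equal $\tfrac{(A+B)^3-A^3-B^3}{6}$. This last verification is the only place where nonassociativity could in principle obstruct the argument; the expansion of $(A+B)^3$ in a Jordan algebra must be handled using only power-associativity (so $(A+B)^3$ is unambiguous) and the relations among the symmetrized cubic monomials $A^2\circ B$, $A\circ B^2$, $(A\circ B)\circ A$, $(A\circ B)\circ B$, which are \emph{not} all equal in a general Jordan algebra — so the identity genuinely needs the specific weights $\tfrac23,\tfrac23,-\tfrac13$ and is where the design of $Q_3(t)$ is pinned down. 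Everything after that is the routine tail estimate.
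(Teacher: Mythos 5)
Your approach is the same as the paper's: identify the third-order Taylor polynomials $T^s_3$, $\widetilde{T^s_3}$, $T^j_3$ of the three building blocks, verify the linear identity $\tfrac{2}{3}T^s_3+\tfrac{2}{3}\widetilde{T^s_3}-\tfrac{1}{3}T^j_3=T^o_3$, and then bound the four degree-$\ge 4$ tails by dominating scalar series, with the factor $3$ for $S_2$ and $\widetilde{S_2}$ coming from the three terms of the Jordan triple product. The only defect is the constant bookkeeping at the end, and the fix is not the one you suggest: the degree-$\ge 4$ tail of $e^{x}$ satisfies $\sum_{k\ge 4}x^{k}/k!\le \tfrac{x^{4}}{4!}e^{x}$, so each tail contributes a factor $\tfrac{1}{24}$, not the $\tfrac{1}{6}$ you used for the $S_2$, $\widetilde{S_2}$ and $J_2$ terms (that is the degree-$\ge 3$ tail bound). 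With the correct factor the weighted sum is $\tfrac{1}{24}\left(1+3\cdot\tfrac{2}{3}+3\cdot\tfrac{2}{3}+\tfrac{1}{3}\right)=\tfrac{1}{24}\cdot\tfrac{16}{3}=\tfrac{2}{9}$ exactly, so no sharpening and no extra cancellation of degree-$3$ terms is needed; you were already, correctly, bounding only the genuine degree-$\ge 4$ remainders after invoking the identity from your Step 1.
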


\begin{proof}
The Taylor expansion of $ \displaystyle \exp\left(t(A + B)\right) $ up to order 3 is $\displaystyle T^o_3(t)$. By direct calculations,  the Taylor expansions of $\displaystyle S_2(t)$,  $\displaystyle \widetilde{S_2}(t)$, and $\displaystyle J_2(t)$ up to order 3 are $\displaystyle T^s_3(t)$, $\displaystyle \widetilde{T^s_3}(t)$, and $\displaystyle T^j_3(t)$, respectively.  

Observe that
\begin{align}\label{eqt:T3o:relation}
T^o_3(t)=\frac{2}{3}T^s_3(t)+\frac{2}{3}\widetilde{T^s_3}(t)-\frac{1}{3}T^j_3(t).	
\end{align}
This implies that the Taylor expansion of $\displaystyle Q_3(t)$ up to order 3 coincides with that of $\displaystyle \exp\big(t(A+B)\big),$ which is $T^o_3(t).$ Hence, \Cref{eqt:3order:lc} follows.

To prove \eqref{ineqt:3order:error}, note that 
\begin{align}\label{ineqt:3order:error1}
\Vert \exp\left(t(A+B)\right)-Q_3(t)\Vert\leq \Vert \exp\left(t(A+B)\right)-T^o_3(t)\Vert+	\Vert Q_3(t)-T_3^o(t) \Vert.
\end{align}

Using the Taylor series remainder, we estimate
\begin{align}
\Vert \exp\left(t(A+B)\right)-T^o_3(t)\Vert&=\left\Vert\sum_{k=4}^{\infty} \frac{[t(A+B)]^k}{k!}\right \Vert\leq \sum_{k=4}^{\infty}\frac{\vert t\vert^k(\Vert A\Vert +\Vert B\Vert )^k}{k!}\nonumber\\
&\leq \frac{\vert t\vert^4(\Vert A\Vert +\Vert B\Vert)^4}{4!}\exp\big(\vert t\vert (\Vert A\Vert +\Vert B\Vert)\big)\label{inqt:T3o}. 	
\end{align}

From \Cref{eqt:T3o:relation}, we deduce
\begin{align}
\Vert Q_3(t)-T_3^o(t) \Vert\leq \frac{2}{3}\Vert S_2(t)-T_3^s(t)\Vert+\frac{2}{3}\Vert \widetilde{S}_3(t)-\widetilde{T_3^s}(t)\Vert+\frac{1}{3}\Vert J_2(t)-T_3^j(t)\Vert	\label{inqt:Q3:T3o}.
\end{align}

Following an argument similar to the proof of the inequality (3.6) in \cite{chehade:wang:wang:2024:suzuki}, we get
\begin{align}
\Vert J_2(t)-T_3^j(t) \Vert \leq \sum_{j+l=4}^{\infty}\frac{\vert t\vert^{j+l} \Vert A\Vert^j\Vert B\Vert^l}{j!l!}	\leq \frac{\vert t\vert^4(\Vert A\Vert +\Vert B\Vert)^4}{4!}\exp\big(\vert t\vert (\Vert A\Vert +\Vert B\Vert)\big) \label{inqt:T3j} .
\end{align}

Let $\widetilde{P}_s(t)$ be the sum of all terms of degree $s$ of Taylor expansion of $S_2(t)$. Then
\begin{align}
\Vert S_2(t)-T_3^s(t)\Vert&=\left\Vert \sum_{s>3}\widetilde{P}_s(t) \right \Vert \leq 3\left[\exp\big(\vert t\vert(\Vert A\Vert +\Vert B\Vert)\big)-\left(\sum_{i=0}^3\frac{\vert t\vert^i(\Vert A\Vert +\Vert B\Vert)^i}{i!}\right)\right ]\nonumber\\
&\leq 3\frac{\vert t\vert^4(\Vert A\Vert +\Vert B\Vert)^4}{4!}\exp\big(\vert t\vert (\Vert A\Vert +\Vert B\Vert)\big)\label{inqt:T3s}.
\end{align}

Similarly, we obtain
\begin{align}
\Vert \widetilde{S_2}(t)-T_3^s(t)\Vert\leq  3\frac{\vert t\vert^4(\Vert A\Vert +\Vert B\Vert)^4}{4!}\exp\big(\vert t\vert (\Vert A\Vert +\Vert B\Vert)\big)\label{inqt:tT3s}.
\end{align}
Combining these inequalities \Cref{ineqt:3order:error1}, \Cref{inqt:T3o}, \Cref{inqt:Q3:T3o}, \Cref{inqt:T3j}, \Cref{inqt:T3s}, and  \Cref{inqt:tT3s}, we obtain \eqref{ineqt:3order:error}.
\end{proof}
    
Using the special Jordan product in Banach algebras, \Cref{eqt:3order:lc} in \Cref{thm:3order:product:JordanBanach} gives the following
\begin{cor}\label{cor:3order:product:associative}
Let $A$ and $B$ be any elements in a unital Banach algebra. 
Then 
\begin{align}
 \exp\left(t(A+B)\right)&=\frac{2}{3}\exp\left(t \frac{A}{2} \right)\exp\left(tB\right)\exp\left(t\frac{A}{2}\right)+\frac{2}{3}\exp\left(t\frac{B}{2}\right)\exp\left(tA\right)\exp\left(t\frac{B}{2}\right)\nonumber\\
&\quad\quad\quad\quad\quad\quad\quad\quad-\frac{1}{6}\exp(tA)\exp(tB)-\frac{1}{6} \exp(tB) \exp(tA) +O(t^4)\label{eqt:3order:product}.
 \end{align}
\end{cor}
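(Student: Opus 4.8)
The plan is to deduce Corollary~\ref{cor:3order:product:associative} directly from \Cref{eqt:3order:lc} in \Cref{thm:3order:product:JordanBanach} by specializing to the Jordan algebra underlying an associative Banach algebra, using the special Jordan product $X\circ Y=(XY+YX)/2$ of \cref{notion:specialJ}. First I would recall that in this special setting, for a self-adjoint-free pair $A,B$ in a unital Banach algebra, the operator $U_{X}(Y)$ reduces to $XYX$, as noted after \cref{notion:uop} in \Cref{defn:uop}. Hence $S_2(t)=U_{\exp(tA/2)}(\exp(tB))=\exp(tA/2)\exp(tB)\exp(tA/2)$ and $\widetilde{S_2}(t)=U_{\exp(tB/2)}(\exp(tA))=\exp(tB/2)\exp(tA)\exp(tB/2)$, while $J_2(t)=\exp(tA)\circ\exp(tB)=\tfrac12\big(\exp(tA)\exp(tB)+\exp(tB)\exp(tA)\big)$.

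Next I would substitute these three expressions into the definition of $Q_3(t)$ in \cref{notion:Q3}, namely $Q_3(t)=\tfrac23 S_2(t)+\tfrac23\widetilde{S_2}(t)-\tfrac13 J_2(t)$. The term $-\tfrac13 J_2(t)$ becomes $-\tfrac13\cdot\tfrac12\big(\exp(tA)\exp(tB)+\exp(tB)\exp(tA)\big)=-\tfrac16\exp(tA)\exp(tB)-\tfrac16\exp(tB)\exp(tA)$, which matches exactly the last line of \eqref{eqt:3order:product}; the first two terms transcribe verbatim. Then \Cref{eqt:3order:lc}, which asserts $\exp(t(A+B))=Q_3(t)+O(t^4)$ in any unital Jordan-Banach algebra, applies to the Jordan-Banach algebra obtained by equipping the associative Banach algebra with the product \cref{notion:specialJ} (its norm is submultiplicative for $\circ$ since $\|X\circ Y\|\le\tfrac12(\|XY\|+\|YX\|)\le\|X\|\|Y\|$), and the claimed identity \eqref{eqt:3order:product} follows.

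There is essentially no obstacle here: the corollary is a pure specialization, and the only thing requiring care is bookkeeping — verifying that the non-associative operators $S_2,\widetilde{S_2},J_2$ collapse to the stated associative products and that the scalar coefficients $\tfrac23,\tfrac23,-\tfrac16,-\tfrac16$ come out correctly after expanding the symmetric product in $J_2$. One should also note that no self-adjointness or involution hypothesis is needed, since \Cref{thm:3order:product:JordanBanach} is stated for arbitrary elements of a unital Jordan-Banach algebra; thus the corollary holds for arbitrary $A,B$ in a unital Banach algebra, as stated. If desired, one could additionally remark that the same substitution into the error estimate \eqref{ineqt:3order:error} yields a quantitative $O(t^4)$ bound with constant $\tfrac29$ in the associative case, but the corollary as stated only records the asymptotic form.
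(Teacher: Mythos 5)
Your proposal is correct and follows exactly the route the paper takes: the corollary is obtained by specializing \Cref{eqt:3order:lc} of \Cref{thm:3order:product:JordanBanach} to the special Jordan product $X\circ Y=(XY+YX)/2$, under which $S_2$, $\widetilde{S_2}$ collapse to the symmetric triple products and $-\tfrac13 J_2(t)$ expands to the two $-\tfrac16$ terms. The bookkeeping of the coefficients and the remark that no self-adjointness is needed are both accurate.
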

Note that in the case that $A$ and $B$ are skew symmetric elements in a unital C*-algebra, $Q_3(t)$ on the right side of \Cref{eqt:3order:product} provides an example of a linear combination of unitaries similar to the one proposed in \cite{childs:wiebe:2012:hamiltonian}.

\subsection{Higher-Order Jordan-Trotter Product Formulas}

In \cite{suzuki:1990:fractal, suzuki:1991:general}, Suzuki developed higher-order Trotter product formulas using recursive construction for associative algebras. 
In this section, we establish higher-order Trotter product formulas, both symmetric and non-symmetric, in Jordan-Banach algebras, which are non-associative.

\begin{thm}\label{thm:Trotter:higher:JordanBanach}
    Let $A_1, A_2,\cdots, A_m$ be elements in a unital Jordan-Banach algebra $\mathcal{A}$. For $n\ge 3$, let $Q_{n-1}(t)$ be the $(n-1)$-th order approximation or decomposition of $\exp\left(t\sum_{k=1}^m A_k\right)$ , that is, 
    \begin{align} \label{eqt:n-1:approximate}
       \exp\left(t\sum_{k=1}^m A_k\right)=Q_{n-1}(t)+O(t^n),
    \end{align}
    where $Q_2(t)$ can be either $J_2(t)$ or $S_2(t)$ as defined in {\rm \cref{notion:J2}} and {\rm \cref{notion:S2}}, respectively; if $m$ is odd, $Q_2(t)$ may also be $Q_{S_2}(t)$ as defined in {\rm\cref{notion:QS2}}.
   
Suppose $Q_n(t)$ is constructed as follows:
\begin{align}\label{eq:non-symmetric:approximant}
    Q_n(t)= \circ \prod_{j=1}^r Q_{n-1}(c_{nj} t) :=\left[\left(Q_{n-1}(c_{n1} t)\circ Q_{n-1}(c_{n2} t)\right)\circ \cdots\right]\circ Q_{n-1}(c_{nr} t),   
\end{align}
where $r\geq 2$ and the scalars $\{c_{nj}\}$ are solutions to the following systems of equations: 
\begin{equation}\label{eqt:norder}
    \sum_{j=1}^r c_{nj}=1\quad\mbox{and}\quad\sum_{j=1}^r (c_{nj})^n=0.	
\end{equation}
Then  $Q_n(t)$ is an $n$-th order approximation, i.e., 
\begin{align} \label{eqt:norder:approximate}
       \exp\left(t\sum_{k=1}^m A_k\right)=Q_{n}(t)+O(t^{n+1}). 
    \end{align}
\end{thm}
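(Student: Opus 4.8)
The plan is to carry out the single inductive step directly: assuming $Q_{n-1}$ satisfies \eqref{eqt:n-1:approximate}, I would compare the Taylor expansions of $Q_n(t)$ and $\exp\bigl(t\sum_k A_k\bigr)$ term by term up to order $n$. Write $X:=\sum_{k=1}^m A_k$. Each $Q_{n-1}(t)$ is an $\mathcal{A}$-valued function analytic in $t$ near $0$, since it is assembled from exponentials by the continuous operations $M_\bullet$, $U_\bullet$ and the Jordan product (and hence so is $Q_n(t)=\circ\prod_{j=1}^r Q_{n-1}(c_{nj}t)$); hypothesis \eqref{eqt:n-1:approximate} says exactly that
\[
Q_{n-1}(t)=\sum_{s=0}^{n-1}\frac{t^{s}}{s!}X^{s}+t^{n}P_{n}+O(t^{n+1}),
\]
where $P_{n}\in\mathcal{A}$ is the (otherwise unknown) degree-$n$ Taylor coefficient of $Q_{n-1}$.

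The mechanism that neutralizes non-associativity is that the Jordan subalgebra of $\mathcal{A}$ generated by $I$ and $X$ is associative and commutative (a Jordan algebra generated by a single element together with the unit is associative, by power-associativity): inside it the powers $X^{s}$ multiply as usual, and for scalars with $\sum_j c_{nj}=1$ one gets $\circ\prod_{j=1}^r \exp(c_{nj}tX)=\prod_{j=1}^r\exp(c_{nj}tX)=\exp(tX)$. Now expand $Q_n(t)$ and collect the terms of each total degree $s\le n$: a contribution of degree $s$ is a nested Jordan product of terms $u_j$, where $u_j$ is the degree-$p_j$ Taylor term of $Q_{n-1}(c_{nj}t)$ with $p_1+\cdots+p_r=s$. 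If $s<n$ then every $p_j<n$, so each $u_j$ is a scalar multiple of a power of $X$; these multiply commutatively and associatively, and summing over all compositions of $s$ and invoking the multinomial theorem with $\sum_j c_{nj}=1$ reproduces precisely $\frac{t^{s}}{s!}X^{s}$, the degree-$s$ term of $\exp(tX)$. If $s=n$, the contributions are of two kinds: (i) all $p_j<n$, contributing $\bigl[\sum_{p_1+\cdots+p_r=n,\ p_j\neq n}\frac{c_{n1}^{p_1}\cdots c_{nr}^{p_r}}{p_1!\cdots p_r!}\bigr]t^{n}X^{n}$; (ii) exactly one $p_{j_0}=n$ with all other $p_j=0$, so $u_{j_0}=(c_{nj_0}t)^{n}P_{n}$ and $u_j=I$ otherwise, and the nested Jordan product is just $(c_{nj_0}t)^{n}P_{n}$ because $I$ is the unit. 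Summing (ii) over $j_0$ gives $\bigl(\sum_j c_{nj}^{n}\bigr)t^{n}P_{n}=0$ by \eqref{eqt:norder}; and in (i) the multinomial identity for $1=(\sum_j c_{nj})^{n}$, after separating the monomials in which some $p_j=n$ (whose total is $\sum_j c_{nj}^{n}=0$), forces the bracket to equal $\tfrac1{n!}$. Hence the degree-$n$ term of $Q_n(t)$ is $\frac{t^{n}}{n!}X^{n}$, again matching $\exp(tX)$, so $Q_n(t)$ and $\exp\bigl(t\sum_k A_k\bigr)$ agree through degree $n$ and \eqref{eqt:norder:approximate} follows from analyticity.

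The routine parts are the multinomial bookkeeping and checking that the admissible base cases $Q_2\in\{J_2,S_2,Q_{S_2}\}$ are analytic and second-order accurate — the latter already recorded in \Cref{prop:Trotter:first:JordanBanach} and \Cref{thm:Trotter:quasisymmetric:JordanBanach}; existence of scalars $\{c_{nj}\}$ solving \eqref{eqt:norder} is a statement about polynomial systems that can be cited from Suzuki's constructions. The one genuine subtlety, and the step I would be most careful about, is the observation highlighted in (ii): through degree $n$ the expansion of the nested product never requires multiplying the unknown coefficient $P_{n}$ by anything other than the unit $I$, so the entire degree-$\le n$ computation collapses into the associative commutative subalgebra generated by $X$ — which is precisely what makes the associative Suzuki recursion go through verbatim in the non-associative Jordan-Banach setting.
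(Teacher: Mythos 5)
Your argument is correct and is essentially the paper's own proof: both compare Taylor coefficients of $Q_n(t)$ and $\exp\bigl(t\sum_k A_k\bigr)$ degree by degree, reduce everything through order $n$ to the associative subalgebra generated by $I$ and $\sum_k A_k$ via the multinomial identity with $\sum_j c_{nj}=1$, and kill the lone appearance of the unknown coefficient $P_n$ (which enters only multiplied by the unit) using $\sum_j c_{nj}^n=0$. Your write-up even makes explicit a point the paper leaves implicit, namely why non-associativity never touches $P_n$ in the degree-$\le n$ computation.
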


\begin{proof}
Let $\displaystyle B=\sum_{k=1}^m A_k.$
Consider the Jordan-Banach subalgebra generated by $\displaystyle \left\{I, B\right\}$, which is associative. Assuming  $\displaystyle \sum_{j=1}^r c_{nj}=1,$ we have
\begin{align}\label{eqt:taylor:product}
 \exp\left(tB\right)=\circ \prod_{j=1}^r \exp\left(c_{nj} tB\right) 	
\end{align}
 
The sum of Taylor terms of 
$\displaystyle \circ \prod_{j=1}^r \exp\left(c_{nj} t B\right)$ of exact degree $n$ is given by:
\begin{align}\label{eqt:taylor:approximate n:exp}
 \sum_{\substack{p_1, \dots, p_r \neq n \\ p_1 + p_2 + \cdots + p_r = n}} \frac{c_{n1}^{p_1} \cdots c_{nr}^{p_r}}{p_1! \cdots p_r!}t^n B^n  + \sum_{j=1}^r \frac{c_{nj}^n}{n!}t^n B^n,	
\end{align}
where the Taylor terms of $\displaystyle \circ \prod_{j=1}^r \exp\left(c_{nj} t B \right)$ of exact degree $n$ consist of two parts: one is the product of Taylor terms of $\displaystyle \exp\left(c_{nj} t B\right)$ with degrees $p_j$ such that $p_1 + \cdots + p_r = n$ and $p_1, \dots, p_r \neq n$; the other is the Taylor term of each $\displaystyle \exp\left(c_{nj} t B\right)$ of exact degree $n$.

By assumptions, 
the $(n-1)$-th order approximation $Q_{n-1}(t)$ can be expressed as:
\begin{align}\label{eqt:taylor:qn-1}
Q_{n-1}(t) = \sum_{s=1}^{n-1} \frac{t^s}{s!} B^s + t^n P_n(A_1, \dots, A_m) + O(t^{n+1}),	
\end{align}
where $P_n(A_1, \dots, A_m)$ is the coefficient of $t^n$ in the Taylor polynomial for $Q_{n-1}(t)$ and, in particular, a homogeneous polynomial in
$A_1, ..., A_m$ of degree $n$. Then, according to \cref{eq:non-symmetric:approximant} and \cref{eqt:taylor:qn-1}, the sum of all Taylor terms of $Q_n(t)$ of degree $n$  is
\begin{equation}\label{eqt:taylor:approximate n}
 \sum_{\substack{p_1, \dots, p_r \neq n \\ p_1 + p_2 + \cdots + p_r = n}} \frac{c_{n1}^{p_1} \cdots c_{nr}^{p_r}}{p_1! \cdots p_r!} t^n B^n + \sum_{j=1}^r c_{nj}^n t^n P_n(A_1, \dots, A_m).	 
\end{equation}
Plugging  \Cref{eqt:taylor:approximate n:exp}, \Cref{eqt:taylor:qn-1}, and \Cref{eqt:taylor:approximate n} into \Cref{eqt:taylor:product}, we see that
whether 
$\displaystyle P_n(A_1, \dots, A_m) \neq \frac{B^n}{n!}$
or not,  
\begin{align*}
    \exp\left(t\sum_{k=1}^m A_k\right) = Q_n(t) + O(t^{n+1}) 
\end{align*}
will hold if $\displaystyle 
    \sum_{j=1}^r (c_{nj})^n = 0$.
For $r\geq 3$,  \Cref{eqt:norder} has real solutions, as demonstrated in  \cite[Eq. (16)]{suzuki:1990:fractal}.
\end{proof}

The following result is on symmetric higher-order Jordan-Trotter product formula, which is different from the non-symmetric one in \Cref{thm:Trotter:higher:JordanBanach}.

\begin{thm}\label{thm:Trotter:higher:symmetric:JordanBanach}
Let $A_1, A_2,\cdots, A_m$ be elements of a unital Jordan-Banach algebra $\mathcal{A}.$ Suppose $\widetilde{Q}_{2}(t)=S_2(t),$ and for $n\ge 3$, let $\widetilde{Q}_{n-1}(t)$ be the $(n-1)$-th order approximation of $\exp\left(t\sum_{k=1}^m A_k\right)$, i.e., 
    \begin{align}\label{eqt:n-1order:tqn-1}
       \exp\left(t\sum_{k=1}^m A_k\right)=\widetilde{Q}_{n-1}(t)+O(t^n). 
    \end{align}
Define $\widetilde{Q}_n(t)$ recursively as follows:
\begin{align}\label{eqt:norder:appro:constr}
 \widetilde{Q}_n(t):=\left\{\widetilde{Q}_{n-1}\left(d_{nl} t\right)\cdots\left\{\widetilde{Q}_{n-1}\left(d_{n2} t\right)\widetilde{Q}_{n-1}(d_{n1} t)\widetilde{Q}_{n-1}\left(d_{n2} t\right)\right\}\cdots \widetilde{Q}_{n-1}\left(d_{nl}t\right)\right\},   
\end{align}
where the scalars $\{d_{nj}\}$ satisfy the following conditions:
\begin{align}\label{eqt:norder:symmetric}
d_{n1}+2\left(\sum_{j=2}^l d_{nk}\right)=1\quad\mbox{and}\quad d_{n1}^n+2\sum_{j=2}^l d_{nj}^n=0.
\end{align}
Then $\widetilde{Q}_n(t)$ is an $n$-th order approximation, i.e., 
\begin{align}\label{eqt:norder:symmetric:appro}
     \exp\left(t\sum_{k=1}^m A_k\right)=\widetilde{Q}_{n}(t)+O(t^{n+1}). 
\end{align}
\end{thm}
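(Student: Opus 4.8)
The plan is to mirror the argument in the proof of \Cref{thm:Trotter:higher:JordanBanach}, replacing the non-symmetric product $\circ\prod_{j=1}^r Q_{n-1}(c_{nj}t)$ by the symmetric triple-product bracket defining $\widetilde{Q}_n(t)$, and to replace the condition $\sum_j c_{nj}=1$, $\sum_j c_{nj}^n=0$ by its symmetric counterpart \cref{eqt:norder:symmetric}. First, set $B=\sum_{k=1}^m A_k$ and work in the unital Jordan-Banach subalgebra generated by $\{I,B\}$, which is associative and commutative; in that subalgebra a nested symmetric triple product of exponentials collapses to an ordinary product, so under the assumption $d_{n1}+2\sum_{j=2}^l d_{nj}=1$ we get the exact factorization
\begin{align}\label{eqt:sym:exp:factor}
\exp(tB)=\left\{\exp(d_{nl}tB)\cdots\left\{\exp(d_{n2}tB)\exp(d_{n1}tB)\exp(d_{n2}tB)\right\}\cdots\exp(d_{nl}tB)\right\},
\end{align}
since the exponents $d_{n1}tB$ and $d_{nj}tB$ (each appearing twice) sum to $tB$.

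Next, extract the degree-$n$ Taylor term on both sides. On the right of \cref{eqt:sym:exp:factor}, expanding each exponential, the degree-$n$ homogeneous component is
\[
\left[\sum_{\substack{p_1,\dots,p_{2l-1}\neq n\\ p_1+\cdots+p_{2l-1}=n}}\frac{d_{n1}^{p_1}d_{n2}^{p_2+p_3}\cdots d_{nl}^{p_{2l-2}+p_{2l-1}}}{p_1!p_2!\cdots p_{2l-1}!}+\frac{d_{n1}^n}{n!}+\sum_{j=2}^l\frac{2d_{nj}^n}{n!}\right]t^n B^n,
\]
which must equal $\frac{t^n}{n!}B^n$. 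Now substitute the hypothesis \cref{eqt:n-1order:tqn-1}, writing $\widetilde{Q}_{n-1}(t)=\sum_{s=0}^{n-1}\frac{t^s}{s!}B^s+t^nP_n(A_1,\dots,A_m)+O(t^{n+1})$ exactly as in \cref{eqt:taylor:qn-1}, where $P_n$ is a homogeneous degree-$n$ polynomial in the $A_k$. Assembling the symmetric bracket $\widetilde{Q}_n(t)$ from these pieces, each factor contributes its low-order Taylor terms (which match those of the corresponding exponential) plus the new $t^nP_n$ term; the degree-$n$ component of $\widetilde{Q}_n(t)$ is therefore
\[
\sum_{\substack{p_1,\dots,p_{2l-1}\neq n\\ p_1+\cdots+p_{2l-1}=n}}\frac{d_{n1}^{p_1}d_{n2}^{p_2+p_3}\cdots d_{nl}^{p_{2l-2}+p_{2l-1}}}{p_1!p_2!\cdots p_{2l-1}!}\,t^n B^n+\left(d_{n1}^n+2\sum_{j=2}^l d_{nj}^n\right)t^n P_n(A_1,\dots,A_m).
\]
Comparing with the degree-$n$ term of $\exp(tB)$ computed above, and noting all lower-degree Taylor terms already agree by \cref{eqt:n-1order:tqn-1} together with \cref{eqt:sym:exp:factor}, the second equation in \cref{eqt:norder:symmetric}, namely $d_{n1}^n+2\sum_{j=2}^l d_{nj}^n=0$, forces the $P_n$ contribution to vanish and the $B^n$ contributions to match, yielding \cref{eqt:norder:symmetric:appro}.

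The one genuine subtlety — and the step I expect to require the most care — is justifying that the $t^nP_n$ terms combine \emph{additively} across the nested symmetric product even though $P_n(A_1,\dots,A_m)$ need not lie in the associative subalgebra generated by $\{I,B\}$. In the non-symmetric proof this is handled implicitly because, when collecting the degree-$n$ term, each $P_n$ factor is multiplied only by degree-$0$ terms ($=I$) of the other factors, so non-associativity and non-commutativity never enter at this order; the same observation applies here, since any product of $P_n$ with a positive-degree Taylor term of another factor has total degree $>n$. I would state this explicitly: modulo $O(t^{n+1})$, the bracket $\widetilde{Q}_n(t)$ is Jordan-multilinear in its arguments at each fixed total degree $\le n$, and at degree $n$ the only way to pick up a $P_n$ is to take $I$ from all other slots, so the coefficient of $t^nP_n$ is exactly $d_{n1}^n+2\sum_{j=2}^l d_{nj}^n$. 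For existence of real scalars $\{d_{nj}\}$ satisfying \cref{eqt:norder:symmetric} one invokes Suzuki's construction, e.g. \cite[Eq. (16)]{suzuki:1990:fractal}, as in the previous theorem. This completes the argument.
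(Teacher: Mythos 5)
Your proposal is correct and follows essentially the same route as the paper's proof: reduce to the associative subalgebra generated by $\{I,B\}$ to factor $\exp(tB)$ symmetrically, compare the exact degree-$n$ Taylor components, and use $d_{n1}^n+2\sum_{j=2}^l d_{nj}^n=0$ to cancel the $\widetilde{P}_n$ contribution. Your explicit justification that each $t^nP_n$ term enters the nested bracket only multiplied by the identity components of the other slots is a point the paper leaves implicit, and is a welcome clarification rather than a deviation.
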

\begin{proof}
The proof is similar to that of \Cref{thm:Trotter:higher:JordanBanach}. For $\displaystyle B=\sum_{k=1}^m A_k$, the sum of Taylor terms of \[ \left\{\exp\left(d_{nl} t B\right)\cdots\left\{\exp\left(d_{n2} tB\right)\exp\left(d_{n1} tB\right)\exp\left(d_{n2} tB\right)\right\}\cdots \exp\left(d_{nl} tB\right)\right\}\] of exact degree $n$ is given by:
\[
\sum_{\substack{p_1, p_2, \dots, p_{2l-1}\neq n \\ p_1 + \cdots + p_{2l-1} = n}} \frac{d_{n1}^{p_1} \cdot d_{n2}^{p_2+p_3}\cdots d_{nl}^{p_{2l-2}+p_{2l-1}}}{p_1! \cdot p_2!\cdots p_{2l-1}!}t^n B^n+\frac{d_{n1}^n}{n!}t^n B^n+\sum_{j=2}^l \frac{2d_{nj}^n}{n!}t^n B^n 
\]

As in \cref{eqt:taylor:qn-1}, 
the $(n-1)$-th order approximation $\widetilde{Q}_{n-1}(t)$ can be expressed as:
\begin{align}\label{eqt:taylor:smqn-1}
\widetilde{Q}_{n-1}(t) = \sum_{s=1}^{n-1} \frac{t^s}{s!} B^s + t^n \widetilde{P}_n(A_1, \dots, A_m) + O(t^{n+1}),	
\end{align}
where $\widetilde{P}_n(A_1, \dots, A_m)$ is the coefficient of $t^n$ in the Taylor polynomial for $\widetilde{Q}_{n-1}(t)$.

Then, according to \cref{eqt:norder:appro:constr} and \cref{eqt:taylor:smqn-1},   the sum of Taylor terms of $\displaystyle \widetilde{Q}_n(t)$ of exact degree $n$ is given by:
\[
 \sum_{\substack{p_1, p_2, \dots, p_{2l-1} \neq n \\ p_1 + \cdots + p_{2l-1} = n}} \frac{d_{n1}^{p_1} \cdot d_{n2}^{p_2+p_3}\cdots d_{nl}^{p_{2l-2}+p_{2l-1}}}{p_1! \cdot p_2!\cdots p_{2l-1}!} t^n B^n +\left(\frac{d_{n1}^n}{n!}+\sum_{j=2}^l \frac{2d_{nj}^n}{n!}\right)t^n \widetilde{P}_n(A_1,\dots, A_m).
\]

If $\displaystyle d_{n1}^n+2\sum_{j=2}^l d_{nj}^n=0,$ then \[
    \exp\left(t\sum_{k=1}^m A_k\right) = \widetilde{Q}_n(t) + O(t^{n+1})
\]
holds.
\end{proof}

\begin{rmk}\label{rmk:inverterbility:symmetric}

{\rm (1)} 
According to Theorem {\rm4.1.3} in {\rm \cite{cabrera:rodriguez:2014:non}}, $\widetilde{Q}_2(t)$, which is $S_2(t)$, is invertible with inverse $\displaystyle \widetilde{Q}_2(-t)$, i.e., $S_2(-t)$. Then, by the symmetric construction of $\widetilde{Q}_3(t)$ in {\rm \Cref{thm:Trotter:higher:symmetric:JordanBanach}} above and Theorem {\rm4.1.3} in {\rm \cite{cabrera:rodriguez:2014:non}}, $\widetilde{Q}_3(t)$ is invertible with inverse $\displaystyle \widetilde{Q}_3(-t)$. 

Similarly, $\widetilde{Q}_n(t)$ is invertible with inverse $\displaystyle \widetilde{Q}_n(-t)$. 

{\rm (2)} If the invertibility of $\widetilde{Q}_n(t)$ is not required, then the non-symmetric form of $J_2(t)$ can also be used for $\widetilde{Q}_2(t)$ in {\rm \Cref{thm:Trotter:higher:symmetric:JordanBanach}}. Moreover, when $m$ is odd, $Q_{S_2}(t)$ is also a valid choice for $\widetilde{Q}_2(t)$ in {\rm \Cref{thm:Trotter:higher:symmetric:JordanBanach}}.

\end{rmk}

\begin{thm}\label{thm:Trotter:higher:symmetry:JordanBanach}
The symmetric $(2l-1)$-th order approximation $\widetilde{Q}_{2l-1}(t)$, i.e., 
 \begin{align*}
       \exp\left(t\sum_{k=1}^m A_k\right) = \widetilde{Q}_{2l-1}(t) + O(t^{2l}),
    \end{align*}
is accurate up to order $t^{2l}$, that is, 
    \begin{align*}
   \widetilde{Q}_{2l-1}(t) = \widetilde{Q}_{2l}(t).
    \end{align*}
\end{thm}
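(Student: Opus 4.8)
The plan is to exploit the $t\mapsto -t$ symmetry built into the recursive construction \eqref{eqt:norder:appro:constr}. By \Cref{rmk:inverterbility:symmetric}(1), $\widetilde{Q}_{2l-1}(t)$ is invertible with $[\widetilde{Q}_{2l-1}(t)]^{-1}=\widetilde{Q}_{2l-1}(-t)$, and for every invertible element $a$ of a unital Jordan-Banach algebra one has $a\circ a^{-1}=I$ (\cite[Theorem 4.1.3]{cabrera:rodriguez:2014:non}); hence
\begin{align*}
\widetilde{Q}_{2l-1}(t)\circ\widetilde{Q}_{2l-1}(-t)=I\qquad\text{for all sufficiently small }t.
\end{align*}
Set $B=\sum_{k=1}^m A_k$. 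Since $\widetilde{Q}_{2l-1}$ is a finite composition of exponentials and $U$-operators it is real-analytic in $t$, and since it is a $(2l-1)$-th order approximation we may write, with $\widetilde{P}_{2l}(A_1,\dots,A_m)$ the coefficient of $t^{2l}$ in its Taylor polynomial,
\begin{align*}
\widetilde{Q}_{2l-1}(t)=\exp(tB)+t^{2l}Z+O(t^{2l+1}),\qquad Z:=\widetilde{P}_{2l}(A_1,\dots,A_m)-\frac{B^{2l}}{(2l)!}.
\end{align*}
Proving $Z=0$ is exactly the goal: it says $\exp(tB)=\widetilde{Q}_{2l-1}(t)+O(t^{2l+1})$, so $\widetilde{Q}_{2l-1}(t)$ is already accurate to order $t^{2l}$ and may be taken as $\widetilde{Q}_{2l}(t)$, giving the asserted equality.

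To obtain $Z=0$ I would compare the coefficient of $t^{2l}$ on the two sides of $\widetilde{Q}_{2l-1}(t)\circ\widetilde{Q}_{2l-1}(-t)=I$. On the right it is $0$. On the left, since $2l$ is even we have $\widetilde{Q}_{2l-1}(-t)=\exp(-tB)+t^{2l}Z+O(t^{2l+1})$; expanding the product by bilinearity of $\circ$, using that the subalgebra generated by $\{I,B\}$ is associative (so $\exp(tB)\circ\exp(-tB)=I$ identically, contributing nothing at order $t^{2l}$) and the norm estimate $\|X\circ Y\|\le\|X\|\|Y\|$ to discard every term carrying an $O(t^{2l+1})$ factor or two factors of $t^{2l}Z$ (the latter being $O(t^{4l})\subseteq O(t^{2l+1})$), the only surviving contribution at order $t^{2l}$ comes from pairing the $t^{2l}Z$ of one factor with the constant term $I$ of the other, namely $I\circ Z+Z\circ I=2Z$. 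Hence $2Z=0$, so $Z=0$, and therefore $\widetilde{Q}_{2l-1}(t)=\exp(tB)+O(t^{2l+1})$ satisfies \eqref{eqt:norder:symmetric:appro} at order $2l$; one sets $\widetilde{Q}_{2l}(t):=\widetilde{Q}_{2l-1}(t)$.

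I expect the main difficulty to be purely the bookkeeping of Taylor remainders in the non-associative product rather than anything conceptual: one must verify that terms containing an $O(t^{2l+1})$ factor stay $O(t^{2l+1})$ after multiplication, that the $Z\circ Z$ term is of order $t^{4l}$, and --- the heart of the matter --- that the odd-order, linear-in-$B$ cross terms between $\exp(\pm tB)$ and $Z$ cancel under $t\mapsto -t$, which is precisely why a genuine extra order is gained. The only non-associative ingredients used, $a\circ a^{-1}=I$ and $\widetilde{Q}_{2l-1}(-t)=\widetilde{Q}_{2l-1}(t)^{-1}$, are already supplied by \cite[Theorem 4.1.3]{cabrera:rodriguez:2014:non} and \Cref{rmk:inverterbility:symmetric}(1) respectively, so no new structural input is required.
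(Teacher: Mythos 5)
Your proposal is correct and takes essentially the same route as the paper: both arguments combine $\widetilde{Q}_{2l-1}(t)\circ\widetilde{Q}_{2l-1}(-t)=I$ (from the symmetric construction and \Cref{rmk:inverterbility:symmetric}) with $\exp(tB)\circ\exp(-tB)=I$, use that $(-t)^{2l}=t^{2l}$ so the two defect contributions add rather than cancel, and evaluate the order-$t^{2l}$ coefficient at $t=0$ (where $\widetilde{Q}_{2l-1}(0)=I$) to get $2Z=0$, hence $Z=0$. The only cosmetic difference is that the paper places the $t^{2l}$ defect term on the exponential side (its $R_{2l}$ is your $-Z$) while you place it on the approximant side; the computation is identical.
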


\begin{proof}
Assume
\[
\exp\left(t\sum_{k=1}^m A_k\right) = \widetilde{Q}_{2l-1}(t) + t^{2l} R_{2l}(A_1, \dots, A_m) + O(t^{2l+1}).
\]

Then, since
\[
I= \exp\left(t\sum_{k=1}^m A_k\right) \circ \exp\left(-t\sum_{k=1}^m A_k\right),\quad I=\widetilde{Q}_{2l-1}(t)\circ \widetilde{Q}_{2l-1}(-t) 
\]
we have
\[I=I+t^{2l} R_{2l}(A_1,\dots, A_m)\circ \widetilde{Q}_{2l-1}(-t)+(-t)^{2l}\widetilde{Q}_{2l-1}(t)\circ R_{2l}(A_1,\dots, A_m)+O(t^{2l+1}). \]

Hence,
\begin{align*}
R_{2l}(A_1, \dots, A_m) \circ \widetilde{Q}_{2l-1}(-t) + R_{2l}(A_1, \dots, A_m) \circ \widetilde{Q}_{2l-1}(t) = O(t).
\end{align*}

By definition in \Cref{eqt:norder:appro:constr}, we have $\widetilde{Q}_{2l-1}(0) = I$, $\displaystyle R_{2l}(A_1, \dots, A_m)=0.$
Thus, the approximation $\widetilde{Q}_{2l-1}(t)$ is accurate up to order $t^{2l}$.
\end{proof}

\begin{rmk}\label{rmk:product:new} Suppose the Jordan products are special and $A_1, A_2, \dots, A_m$ are operators. Then {\rm \Cref{thm:Trotter:higher:JordanBanach}} and {\rm \Cref{thm:Trotter:higher:symmetric:JordanBanach}} provide two higher-order Jordan-Trotter product formulas for associtive algebras. These formulas are analogous to the second-order Jordan-Trotter product formulas given in {\rm \cite[Theorem 2.1 and Theorem 2.3]{wang:2023:suzuki}} and offer additional approaches to simulate the quantum system. \end{rmk}

\section{Error Analysis and Visualization of Product Formula Approximations}\label{sec:example}

In this section, we present a comprehensive evaluation of the Jordan-Trotter product formula introduced in Eq. \Cref{eqt:3order:product} by analyzing its error behavior from two different perspectives. This analysis highlights the formula's  advantages over traditional Trotter-Suzuki decompositions, including the third-order symmetric formula.

First, in \Cref{sec:analyticalerrors}, we quantify the error between the exact unitary and the approximate unitary generated by the product formulas. This offers a precise mathematical measure of the accuracy of the Jordan-Trotter formula compared to conventional Trotter-Suzuki methods.
Second, in~\Cref{sec:statefidelity}, we evaluate the practical impact of these errors by applying the unitaries to a given quantum state and computing the fidelity between the resulting state and the exact evolved state. This approach offers a physical perspective on the effectiveness of product formulas, illustrating their utility in realistic quantum simulation scenarios. 
The Jordan-Trotter approximation is implemented in~\Cref{sec:example} using a \textit{linear combination of unitaries} (LCU) quantum circuit \cite{childs:wiebe:2012:hamiltonian}, which efficiently encodes the structure of the product formula. Additionally, contour and line plots illustrate the parameter regimes where the Jordan-Trotter formula outperforms traditional approaches, offering valuable visual insights into its performance.
Together, these analyses provide both theoretical and practical insights into the performance of the Jordan-Trotter product formula.

\subsection{Error Analysis of Product Formulas for Hamiltonian Evolution}\label{sec:analyticalerrors}

The error associated with the higher-order Trotter-Suzuki decomposition of a Hamiltonian $H = A+B$ depends on the order of the decomposition.
For a general Hamiltonian $H = A+B$ on a finite dimensional space, the Trotter-Suzuki formulas up to the fourth order are expressed as follows:
\begin{align}
 J_1(t)&=\exp(-itA)\exp(-itB) \label{eq:trotter:1}\\
  S_2(t)&=\exp\left(-it\frac{A}{2}\right)\exp(-itB)\exp\left(-it\frac{A}{2}\right)\label{eq:trotter:2}\\
S_3(t)&=S_2\left((2-2^{1/3})^{-1}t\right)S_2\left(-2^{1/3}(2-2^{1/3})^{-1}t\right)S_2\left((2-2^{1/3})^{-1}t\right)\label{eq:trotter:3}
\end{align} 
Here,  $J_1(t)$ is the first-order approximation,  $S_2(t)$ is the second-order approximation, and $ S_3(t)$ provides both third- and fourth-order accuracy due to its symmetric construction.

To quantify the error for different methods, we measure the difference between the true and approximate time evolution using Frobenius norm:
\begin{equation}\label{eq:error:unitaries}
    {\rm Error} := \left\Vert \widetilde{U}(t) - \exp\left({-itH}\right)\right\Vert_{\rm F}
\end{equation}
where $\widetilde{U}(t)$ is either $J_1(t)$ in \Cref{eq:trotter:1}, $S_2(t)$ in \Cref{eq:trotter:2}, $S_3(t)$ in \Cref{eq:trotter:3}, or $Q_3(t)$ in \cref{notion:Q3}, and $\exp\left({-itH}\right)$ is the true evolution operator. 

To evaluate and compare the performance of these product formulas, we consider the Hamiltonian $H = d_{1}X + d_{2}Y$, where $X$ and $Y$ are Pauli matrices. Contour plots (Figure~\ref{fig:contour_erorr_plots}) were generated to depict the Frobenius norm of the error as a function of $td_1$ and $td_2$. These plots provide a detailed comparison between the third-order Trotter Suzuki formula and the Jordan-Trotter product formula, highlighting their respective advantages across different parameter regimes.

\begin{figure}[!ht]
    \centering
    \includegraphics[width=0.99\textwidth]{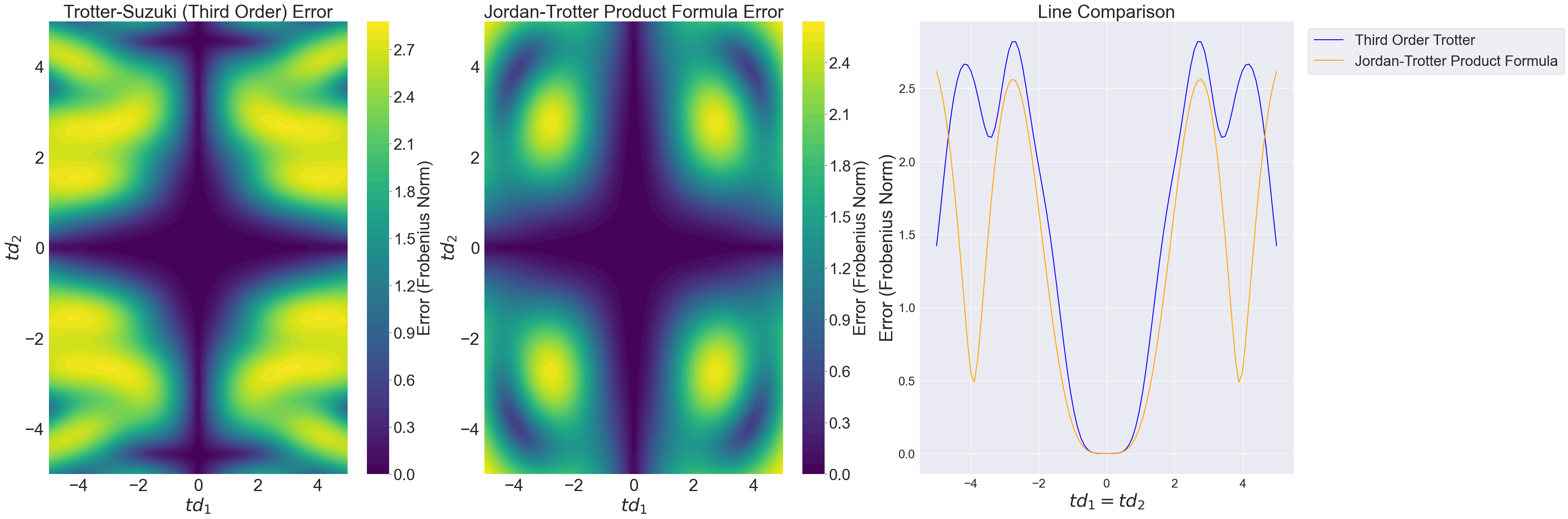}
    \caption{
    Comparison of the error associated with the third-order Trotter-Suzuki decomposition and the Jordan-Trotter product formula for the Hamiltonian $H= d_1 X + d_2 Y$. The left and middle panels show contour plots of the Frobenius norm of the error as a function of $td_1$ and $td_2$ for the third-order Trotter-Suzuki decomposition and the Jordan-Trotter product formula, respectively. The rightmost panel provides a line plot comparing the errors of both methods along the diagonal $td_1 = td_2$.
    }
    \label{fig:contour_erorr_plots}
\end{figure}

The results of the plots reveal key insights into the error behavior of the third-order Trotter-Suzuki decomposition and the Jordan-Trotter product formula. Both methods exhibit a small error along the $td_1 = 0$ and $td_2=0$ lines, which is expected as the effective Hamiltonian approaches a single operator along these axes. However, the Jordan-Trotter product formula demonstrates significantly smaller errors across a wider region of the parameter space compared to the third-order Trotter-Suzuki decomposition. This is particularly evident in the line plot (Figure~\ref{fig:contour_erorr_plots} - right), where the Jordan-Trotter product formula shows markedly reduced errors in the vicinity of $td_1 = td_2 = \pm 4$. Additionally, the periodicity observed in the error estimates for both methods aligns with the sinusoidal forms that the Pauli matrices acquire when exponentiated. This periodic behavior underscores the mathematical structure of the approximations and highlights the regions where the Jordan-Trotter formula achieves superior accuracy. These findings emphasize the potential of the Jordan-Trotter product formula for applications requiring reduced error in broader parameter regimes.


\subsection{State Fidelity Analysis under Product Formula Approximations}\label{sec:statefidelity}

In this subsection, we evaluate the errors introduced by the product formula approximations acting on a quantum state $\ket{\psi(0)}$. 
Specifically for Pauli matrices Z and X, let
\begin{align}
    H = \alpha Z + \beta X,
\end{align}
\noindent
and for $A=aZ$ and $B=\beta X$, we consider
\begin{align}\label{eqt:error:state}
   \varepsilon:= \left\Vert\widetilde{U}(t)\ket{\psi(0)}-\exp(-iHt)\ket{\psi(0)} \right\Vert 
\end{align}
where the approximation $\widetilde{U}(t)$ is either $J_1(t)$ in \Cref{eq:trotter:1}, $S_2(t)$ in \Cref{eq:trotter:2}, $S_3(t)$ in \Cref{eq:trotter:3}, or $Q_3(t)$ in \cref{notion:Q3}, and $\exp\left({-itH}\right)$ is the exact time-evolution operator. To illustrate the differences and strengths of the methods, we compare the approach introduced in this paper with the well-established Trotterization technique. We consider the simple case of a single-qubit Hamiltonian, which provides a clear and accessible framework for understanding the cost and error scaling of each method.  

One can show 
\begin{align}
   \exp({-itH}) = cos\left(t\sqrt{\alpha^2 + \beta^2}\right)\cdot I - \frac{i}{\sqrt{\alpha^2 + \beta^2}} sin\left(t\sqrt{\alpha^2 + \beta^2}\right)\cdot H
\end{align}

For simplicity, and without loss of generality, we choose the initial state $\ket{\psi(0)}=\ket{0}$, which corresponds to the quantum state where all qubits are initialized to 0. 
\begin{figure}[ht]
    \centering
    \includegraphics[width=0.95\textwidth]{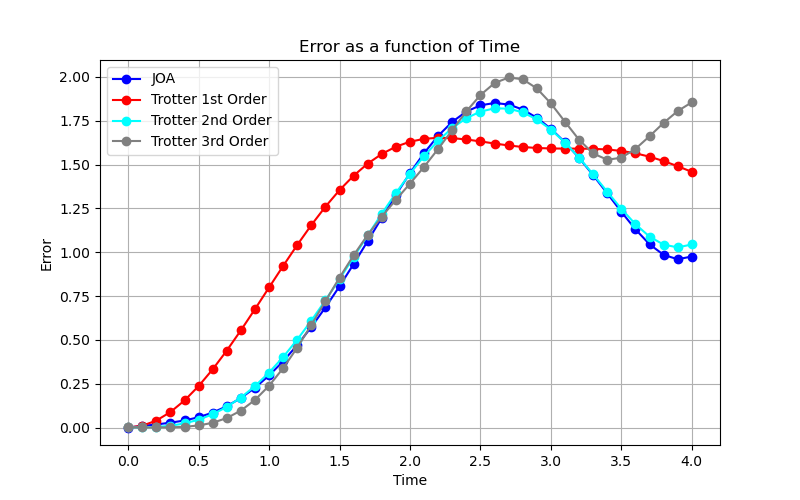}
    \caption{Comparison of the time evolution accuracy for the first-order(red), second-order(cyan), and third-order(gray) Trotter-Suzuki decompositions, along with the Jordan-Trotter product formula(darker blue), against the exact result for $H =  Z +  X$. The plot shows the deviation from the exact solution as a function of evolution time $t$, highlighting the improved accuracy of higher-order approximations.} 
    \label{fig:comparison}
\end{figure}

In Fig.~\ref{fig:comparison}, we illustrate the error scaling behavior of the Trotter-Suzuki decomposition (at first, second,  and  third order) as applied to the simulation of a single-qubit Hamiltonian $H = \alpha Z + \beta X$ for $\alpha=\beta=1$. The $x$-axis represents the simulation the $t$, which is the parameter governing the evolution of a quantum state under the Hamiltonian $H$. In physical terms, $t$ corresponds to how long the system evolves under the influence of $H$, with larger $t$ leading to a more pronounced evolution. In the context of quantum simulation, $t$ also determines the challenge of accurately approximating the time evolution.

\newpage 

\section{Acknowledgments}

The authors would like to thank the referee for the very careful reading of our manuscript, and for his or her numerous comments and suggestions, which greatly enhanced the quality and readability of the paper.

S. C. is supported by the DOE Advanced Scientific Computing Research (ASCR) Accelerated Research in Quantum Computing (ARQC) Program under field work proposal 3ERKJ354.  A.D. is supported by DOE Office of Nuclear Physics (NP) through the ``Quantum learning for accelerated nuclear science'' (QLASS) project under FWP ERKBP91.  This work was initiated during Z.W.'s visit to Oak Ridge National Laboratory in Summer 2024, supported by the DOE Office of High Energy Physics (HEP) under grant ERKAP89. Z.W. extends his gratitude to Oak Ridge National Laboratory for their warm hospitality and generous support during the visit.

\bigskip
\noindent 
{\bf Declaration of competing interest:}\
The authors declare that they have no known competing financial interests or personal relationships that could have appeared to influence the work reported in this paper.

\bigskip
\noindent 
{\bf Data availability:} \
The data supporting this study are available upon reasonable request.


\begin{thebibliography}{100}

\bibitem{alfsen:2003:geometry}
    E.~M.~Alfsen and F.~W.~Shultz,  
    \textit{Geometry of State Spaces of Operator Algebras},  
    Mathematics: Theory \& Applications,  
    Birkh{\"a}user Boston, Inc., Boston, MA, 2003.
    
 
\bibitem{cabrera:rodriguez:2014:non}
    M.~Cabrera and A.~Rodr\'{i}guez,  
    \textit{Non-Associative Normed Algebras. Vol. 1. The Vidav-Palmer and Gelfand-Naimark Theorems},  
    Encyclopedia of Mathematics and its Applications, \textbf{154},  
    Cambridge University Press, Cambridge, 2014, xxii+712 pp.

\bibitem{chehade:wang:wang:2024:suzuki}
     S.~Chehade, S.~Wang, and Z.~Wang,  
    Suzuki Type Estimates for Exponentiated Sums and Generalized Lie-Trotter Formulas in JB-Algebras,  
    \textit{Linear Algebra and its Applications}, \textbf{680} (2024), 156--169.

\bibitem{childs:wiebe:2012:hamiltonian}
A.~M. Childs and N. Wiebe, 
Hamiltonian simulation using linear combinations of unitary operations, 
\textit{Quantum Inf. Comput}, {\bf 12} (2012), no.~11-12, 901--924.  

\bibitem{childs:etal:trotter}
    A.~Childs \textit{et al.},  
    Theory of Trotter Error with Commutator Scaling,  
    \textit{Physical Review X}, \textbf{11} (2021), no.~1, 011020.

\bibitem{escolano:peralta:villena:2024:lie}
    G.~Escolano, A.~Peralta, and A.~Villena,  
    Lie–Trotter Formulae in Jordan-Banach Algebras with Applications to the Study of Spectral-Valued Multiplicative Functionals,  
    \textit{Results in Mathematics}, \textbf{79} (2024), no.~1, Paper No.~17, 22 pp.

\bibitem{hanche:1984:jordan}
    H.~Hanche-Olsen and E.~St{\o}rmer,  
    \textit{Jordan Operator Algebras},  
    Monographs and Studies in Mathematics, \textbf{21},  
    Pitman (Advanced Publishing Program), Boston, MA, 1984.
    
\bibitem{Jordan:Pascual:vonNeumann}
    P.~Jordan, J.~von Neumann, and E.~Wigner,  
    On an Algebraic Generalization of the Quantum Mechanical Formalism,   
   \textit{Annals of Mathematics}, \textbf{35}, (1934) no.~1,  29--64. 

\bibitem{Poulin:etal:chemistry}
    D.~Poulin \textit{et al.},  
    The Trotter Step Size Required for Accurate Quantum Simulation of Quantum Chemistry,  
    \textit{Quantum Inf. Comput.}, \textbf{15} (2015), no.~5-6, 361--384.
    
\bibitem{suzuki:1976:generalized}
    M.~Suzuki,  
   Generalized Trotter's formula and systematic approximants of exponential operators and inner derivations with applications to many-body problems,  
    \textit{Comm. Math. Phys.}, \textbf{51} (1976), no.~2, 183--190.    

\bibitem{suzuki:1985:transfer}
    M.~Suzuki,  
   Transfer-matrix method and Monte Carlo simulation in quantum spin systems,  
    \textit{Phys. Rev. B}, \textbf{31} (1985), no.~5, 2957--2965.        
    

\bibitem{suzuki:1990:fractal}
    M.~Suzuki,  
    Fractal Decomposition of Exponential Operators with Applications to Many-Body Theories and Monte Carlo Simulations,  
    \textit{Physics Letters A}, \textbf{146} (1990), no.~6, 319--323.

\bibitem{suzuki:1991:general}
    M.~Suzuki,  
    General Theory of Fractal Path Integrals with Applications to Many-Body Theories and Statistical Physics,  
    \textit{Journal of Mathematical Physics}, \textbf{32} (1991), no.~2, 400--407.



\bibitem{trotter:1959:product}
    H.~Trotter,
    On the product of semi-groups of operators, 
    \textit{Proceedings of the American Mathematical Society}, \textbf{10.4} (1959), 545--551.

\bibitem{wang:wang:2021:means}
    S.~Wang and Z.~Wang, 
    Operator means in JB-algebras, 
    \textit{Reports on Mathematical Physics}, \textbf{88} (2021) no.~3, 383--398 

\bibitem{wang:wang:2022:refined}
    S.~Wang and Z.~Wang, 
Refined operator inequalities for relative operator entropies,
 \textit{Anal. Math. }, \textbf{48} (2022) no.~1, 185--198
    
\bibitem{wang:wang:2022:entropy}
        S.~Wang and Z.~Wang, 
    Relative operator entropies and Tsallis relative operator entropies in JB-algebras, 
    \textit{Rocky Mountain Journal of Mathematics}, \textbf{52} (2022) no.~3, 1073--1087

\bibitem{wang:chehade:dumitrescu:2024:semi}
    Y.~Wang, S.~Chehade, and E.~Dumitrescu,  
    Semicoherent Symmetric Quantum Processes: Theory and Applications,  
    \textit{AVS Quantum Science}, \textbf{6} (2024), no.~3, 033805.    
    
\bibitem{wang:2023:suzuki}
    Z.~Wang,  
    Suzuki Type Estimates for Exponentiated Sums and Generalized Lie-Trotter Formulas in Banach Algebras,  
    \textit{arXiv:2306.13791}, 2023.


\end{thebibliography}
\end{document}